\def\ps@headings{%
\def\@oddhead{\mbox{}\scriptsize\rightmark \hfil \thepage}%
\def\@evenhead{\scriptsize\thepage \hfil \leftmark\mbox{}}%
\def\@oddfoot{}%
\def\@evenfoot{}}
\newcommand{\calO}{\mathcal{O}}
\newcommand{\E}{\mathbb{E}}
\newcommand{\X}{\mathbf{X}}
\newcommand{\dist}{d_X}
\newcommand{\distPre}{d_{PRE}}
\newcommand{\id}{\mathit{id}}
\newcommand{\embed}{\mathbf{A}}
\newcommand{\stab}{\mathbf{S}}
\newcommand{\add}{\mathit{ca}}
\newcommand{\m}{M}
\newcommand{\closest}{\mathbf{B}}
\newcommand{\integers}{Ic}
\newcommand{\clemens}[1]{{\color{green}#1}}
\newcommand{\stef}[1]{#1}
\newcommand{\stefout}[2]{#2}
\newcommand{\longonly}[1]{#1}
\newcommand{\shortonly}[1]{}
\newtheorem{theorem}{Theorem}[section]
\newtheorem{lemma}[theorem]{Lemma}
\newtheorem{corollary}[theorem]{Corollary}
\newtheorem{prop}[theorem]{Proposition}
\newtheorem{definition}[theorem]{Definition}
\author{
  \IEEEauthorblockN{Stefanie Roos\IEEEauthorrefmark{3}, Martin Byrenheid\IEEEauthorrefmark{2}, Clemens Deusser\IEEEauthorrefmark{2}, Thorsten Strufe\IEEEauthorrefmark{2}}
  \IEEEauthorblockA{\IEEEauthorrefmark{3}University of Waterloo\\
  sroos@uwaterloo.ca
  }
  \IEEEauthorblockA{\IEEEauthorrefmark{2}TU Dresden\\
  \{\texttt{firstname.lastname}\}@tu-dresden.de
  }
}
\title{\shortonly{BD-CAT: }Balanced Dynamic Content Addressing in Trees\longonly{\footnotemark{*}}}
\begin{document}
\maketitle \longonly{\footnotetext{Extended Version of 'BD-CAT: Balanced Dynamic Content Addressing in Trees’, INFOCOM 2017}}

\begin{abstract}

Balancing the load in content addressing schemes for route-restricted networks represents a challenge with a wide range of applications.
Solutions based on greedy embeddings maintain minimal state information and enable efficient routing, but any such solutions currently result in either imbalanced content addressing, overloading individual nodes, or are unable to efficiently account for network dynamics.


In this work, we propose a greedy embedding in combination with a content addressing scheme that provides balanced content addressing while at the same time enabling efficient stabilization in the presence of network dynamics.
We point out the trade-off between stabilization complexity and maximal permitted imbalance when deriving upper bounds on both metrics for two variants of the proposed algorithms.
Furthermore, we substantiate these bounds through a simulation study based on both real-world and synthetic data.   


\end{abstract}

\section{Introduction}
Efficiently routing packets while maintaining little to no state information is a fundamental problem of networking.
The issue concerns Internet routing, in particular content-centric networking \cite{roos2014enhancing}, as well as dynamic networks such as wireless sensor networks \cite{jiang2014distributed} and Friend-to-Friend (F2F) overlays in the manner of Freenet \cite{clarke2010private}.
The routing configuration is frequently adapted to implement content addressing, where the node identifier (or: address) is used to determine the allocation of resources to specific nodes.
This scenario typically makes the configuration and routing particularly difficult, as the nodes are expected to exhibit extensive dynamics in terms of joining and leaving the system.

Greedy embeddings guarantee the success of stateless greedy routing and thus facilitate efficient communication \cite{PapadimitriouRatajczak05}. 
All existing distributed greedy embeddings are based on creating a spanning tree and subsequently assigning identifiers to each node. 
Some embedding algorithms can account for topology changes without a complete recomputation of the local state \cite{CvetkovskiCrovella09,HerzenEtAl11}.
In contrast to structured P2P overlays, greedy embeddings do not require the ability to change the network topology, making them suitable for all of the above scenarios. 

Implementing content addressing on greedy embeddings, however, faces several challenges. 
The current proposals are either unable to assign content in a fair manner \cite{Kleinberg07,hofer2013greedy}, are unable to deal with dynamics \cite{roos2014enhancing}, or considerably reduce the efficiency by establishing an additional overlay \cite{roos2016anonymous}.

We aim to realize fair resource allocation in terms of a balanced content addressing in such trees in dynamic environments. 
In other words, we require an embedding algorithm in combination with a content addressing scheme such that i) the overhead of stabilization after node arrivals or departures is low on average and ii) the content addressing is balanced, i.e., the fraction of content assigned to a node should not considerably exceed its share of the overall storage capacity.  

In this paper, we propose to assign each content an address in the form of a vector of keyed hashes. 
Similarly, we assign node addresses in the form of vectors.
The vector encodes the part of the namespace (in our case: hashes of content) that is allocated to the respective node, and each component of the vector contains a tuple indicating ranges within the namespace.
Node addresses are only changed if topology adaptations result in nodes being responsible for more addresses than the current upper bound permits. 

Our algorithm assigns at most $\calO\left(\frac{\log n}{n} \right)$ of the content to a node at any time if the tree depth is $\calO(\log n)$. Thus, the asymptotic bound matches the bound for DHTs \cite{malkhi2002viceroy}. 
Furthermore, the expected communication complexity for stabilization after a node join or departure is $\calO(polylog(n))$ if the expected number of siblings, i.e., the nodes with the same parent, is bound polylog in $n$. 
Otherwise, if such a bound on the number of siblings does not exist, the use of virtual binary trees allows us to achieve polylog complexity nevertheless, at the price of storing up to $\calO\left(\frac{\log^2 n}{n}\right)$ of the content on one node. 
We perform a simulation study based on real-world churn traces and topologies of several thousands of nodes to quantify the stabilization overhead and the balance of the content addressing in exemplary scenarios. 
Our results indicate that i) the average stabilization overhead is reduced to less than 3\% of the overhead of a complete re-embedding, and ii) the content addressing exhibits a similar or even better fairness than common content addressing schemes such as DHTs.  


\section{Related Work}
\label{sec:related}
Greedy embeddings assign coordinates to nodes in a graph such that nodes can route messages based only on the coordinates of their neighbors.
Generally, an embedding algorithm computes such an embedding by first constructing a spanning tree and then assigning coordinates starting from the root. 
Parents assign their children coordinates based on their own coordinate.
In this manner, greedy embeddings realize efficient routing in any connected graph while maintaining very little state information. 
 
During the last years, a multitude of embedding algorithms has been developed, using coordinates from hyperbolic \cite{Kleinberg07,CvetkovskiCrovella09,EppsteinGoodrich11-SuccinctHyperEmbed,Maymounkov06}, Euclidean \cite{Maymounkov06,WestphalPei09}, or custom-metric
spaces \cite{HerzenEtAl11,ZhangEtAL11}.   
However, the problem of content addressing is mostly disregarded, with a few notable exceptions discussed in the following.

For instance, the authors of \cite{Kleinberg07} and \cite{NewsomeSong03} show that their embedding allows for content addressing. 
However, neither consider the fraction of addresses, and thus content, assigned to individual nodes. 
When applying \cite{Kleinberg07} on autonomous system (AS) topologies of several hundreds of nodes, the algorithm allocates more than 90\% of all content to one node \cite{roos2014enhancing}.  

To the best of our knowledge, \cite{hofer2013greedy} first considers load balancing in terms of content addressing for greedy embeddings. 
The authors design Prefix Embedding, an embedding algorithm for Friend-to-Friend (F2F) overlays, and evaluate how their design performs when applied for content storage and retrieval in such route-restricted overlays.   
Their simulation indicate a high imbalance with regard to the fraction of stored content, sometimes assigning more than 50\% of all content to a single node in an overlay of tens of thousands of nodes. 
Roos et al. \cite{roos2014enhancing} inversely adapt the addressing scheme for the content and assign \emph{topology-aware keys} to files, i.e., the address of a file depends on the structure of the spanning tree. In this manner, the expected fraction of files with an address in a certain range corresponds to the fraction of node coordinates in this range. 
Though the content addressing is indeed balanced, the approach requires that the spanning tree is globally known. Furthermore, network dynamics result in constant changes of node coordinates and file addresses, which make indexing of addresses difficult.

In contrast, \cite{roos2016anonymous} circumvents the problem of content addressing directly on the embedding by establishing an additional structured overlay on top of it. However, in this manner, they decrease the efficiency of the routing by a factor of about 4. 

In summary, balanced content addressing in embeddings for dynamic networks without global topology information is an open problem.
In the following, we propose and evaluate a solution.

\section{Problem Formalization}
\label{sec:formal}

In this section, we introduce basic notation and formally express our goals.
The key terms we need to define are those of a (greedy) embedding, a content addressable storage, and a stabilization algorithm for such a structure. 

\subsection{Graphs and Embeddings}

Throughout the paper, we rely on a graph $G=(V,E)$ with nodes $V$ and edges $E \subset V\times V$.
For simplicity, we restrict our analysis to graphs that are bidirectional, i.e., $(u,v) \in E$ iff $(v,u)\in E$ for all $u,v \in V$, and connected, i.e., there exist $w_0=u,w_1, \ldots, w_{l-1}, w_l=v$ such that $(w_{i-1},w_{i})\in E$ for all $i=1\ldots l$.
Furthermore, we denote the set of \emph{neighbors} of $v \in V$ by $N(v)=\{u \in V: (u,v) \in E\}$.

A \emph{spanning tree} is defined as a subgraph $T_G=(V, E^T)$ of $G$ such that $T_G$ is connected and $E^T \subset E$ is of minimal size. 
In a spanning tree, there exists exactly one path between every source node $s$ and destination $e$. 
A \emph{rooted spanning tree} is a spanning tree $T_G$ with a distinguished element $r \in V$, the root. We express the positions of nodes in the spanning tree with regard to the root. The \emph{level} or \emph{depth} of a node $u$ is the length of the unique path between $u$ and the root in the spanning tree. Furthermore, the depth of the tree is the maximal depth over all nodes. 
In addition, the relation of two nodes $u,v \in V$ can be expressed in relation to the root. If $u$ is included in the unique path between the root $r$ and $v$, $u$ is an \emph{ancestor} of $v$ and $v$ a \emph{descendant} of $u$.
Furthermore, if the edge $(v,u) \in E^T$, $u$ is the parent of $v$ and $v$ a child of $u$.
Children of the same node are called \emph{siblings}.
\longonly{Embeddings usually rely on rooted spanning trees to assign coordinates to nodes.

\begin{definition}
\label{def:embeddings}
A \emph{(graph) embedding} on a graph $G=(V,E)$ is a function $\id: V \rightarrow \X$ into a metric space $(\X, \dist)$.
We call $\id(u)$ the \emph{coordinate} or \emph{address} of $u$.
Consider a pair of distinct nodes $u,v \in V$\clemens{, $(u,v) \notin E$}. 
The embedding $\id$ is called \emph{greedy} if for all such pairs, $u$ has a neighbor $w \in N(u)$ with $\dist(\id(w), \id(v))< \dist(\id(u), \id(v))$.  
The algorithm $\mathbf{A}$ for deriving the embedding $\id$ is called an \emph{embedding algorithm}.
\end{definition}
For brevity, we generally write distance of $u$ and $v$ to refer to the distance of their coordinates. The above definition holds for any distance $\dist: \X \times \X \rightarrow \mathbb{R}$. We introduce realizations for $\X$ and $\dist$ in Section \ref{sec:algo}. 
Then, an equivalent definition of a greedy embedding is the guaranteed successful termination of the standard greedy routing algorithm, which specifies that each node along the path between source and destination forwards the message to the closest neighbor to the destination.
If the coordinate assignment $\id$ relies on the previous construction of a rooted spanning tree, we call $\id$ a \emph{tree-based embedding} or \emph{tree-based greedy embedding} if $\id$ is greedy.
So, greedy embeddings allow the discovery of a node by a standard greedy algorithm. However, there is little work on how to store and retrieve content based on such an embedding.
}

\shortonly{A \emph{(graph) embedding} on a graph $G=(V,E)$ is a function $\id: V \rightarrow \X$ into a metric space $(\X, \dist)$.
We call $\id(u)$ the \emph{coordinate} or \emph{address} of $u$.
Consider a pair of distinct nodes $u,v \in V$. 
The embedding $\id$ is called \emph{greedy} if for all such pairs, $u$ has a neighbor $w \in N(u)$ with $\dist(\id(w), \id(v))< \dist(\id(u), \id(v))$.  
\stef{The algorithm $\mathbf{A}$ for deriving the embedding $\id$ is called an \emph{embedding algorithm}.} 
For brevity, we generally write distance of $u$ and $v$ to refer to the distance of their coordinates.
Then, an equivalent definition of a greedy embedding is the guaranteed successful termination of the standard greedy routing algorithm, which specifies that each node along the path between source and destination forwards the message to the closest neighbor to the destination.
If the coordinate assignment $\id$ relies on the previous construction of a rooted spanning tree, we call $\id$ a \emph{tree-based embedding} or \emph{tree-based greedy embedding} if $\id$ is greedy.}

\subsection{Balanced Content Addressing} 

Content addressing generally refers to a deterministic addressing scheme for content. In the context of distributed systems, content addressing implies mapping content to nodes based on node coordinates and content addresses.
Here, we map content to the node closest to the content's address.
The scenario can be easily generalized such that content is stored on $k>1$ nodes by e.g., storing content on the closest $k$ nodes or using $k$ different addresses for each file \cite{chen2008insight}. 

In order to allow for content to be stored on closest nodes, we first need to extend the notion of a greedy embedding.

\begin{definition}
\label{def:generalGreedy}
Let $\id: V \rightarrow \X$ be a greedy embedding on a graph $G$ and $\X' \subset \X$ a countable address space, and $\add: C \rightarrow \X'$ an \emph{addressing function} for a set of content $C$.
Then $\id$ is called a \emph{content addressable greedy} embedding if i) $|\m(x')|= |argmin_{v \in V}\{\dist(id(v),x')\}|=1$ for all $x' \in X'$, and 
ii) $\forall x' \in \X', \forall v \in V, v \notin \m(x), \exists w \in N(v): \dist(\id(w),x) < \dist(\id(v),x)$.
For a graph $G=(V,E)$ with such an embedding, the tuple $(G,\id,C,\add)$ is called a \emph{content addressable storage}.
\end{definition} 

Definition \ref{def:generalGreedy} guarantees that greedy routing terminates at the closest node to an address $x$. Thus, nodes can store and retrieve files using greedy routing. 
However, Definition \ref{def:generalGreedy} does not demand that the content is distributed on the nodes in a balanced manner. Thus, we now characterize the notion of balanced or fair content addressing.
\begin{definition}
\label{def:fbalance}
Let $(G,\id,C, \add)$ be a content addressable storage.
Furthermore, $\forall v \in V$ let $\closest(v)=\{x \in \X': \forall w \in V \dist(\id(v),x)\leq \dist(\id(w),x)\}$ be the set of coordinates in $\X'$ closest to $v$, and $\mu$ be the normalized point measure, i.e., $\mu$ maps a subset $E$ of $\X'$ to the fraction of coordinates contained in $E$.  
$(G,\id,C,\add)$ is said to be \emph{$(f,\delta)$-balanced} for a real-valued factor $f \geq 1$ if 
\begin{align}
\label{eq:fbalance}
\forall v \in V, \mu(\closest(v)) \leq f\cdot \frac{1}{|V|}+\delta.
\end{align}
An embedding algorithm $\mathbf{A}$ is called \emph{$(f,\delta)$-balanced} if it generates embeddings $\id$ such that the  content addressable storage $(G,\id,C,\add)$ is $(f,\delta)$-balanced.
\end{definition}
Essentially, Definition \ref{def:fbalance} states that the expected fraction of content assigned to a node should at most be $f$ times the average content assigned to each node.
A well-known example for balanced content addressing on freely adaptable topologies are DHTs.
In DHTs, file addresses correspond to $b$-bit hashes of either the file's name, description, or content.
DHTs are  $(\calO(\log n), 0)$-balanced \cite{malkhi2002viceroy}.

We now shortly motivate some details in Definition \ref{def:fbalance}. 
The additional term $\delta$ is assumed to be small in comparison to $1/n$. Its purpose is mainly to compensate for rounding errors emerging from the fact that, for a finite $X$, $n$ most likely does not evenly divide $|X|$.
The use of the normalized point measure $\mu$ is only sensible if the file addresses $\add(C)$ are approximately uniformly distributed. Otherwise, $\mu(A)$ should correspond to the measure of the preimage $\add^{-1}(A)$.
However, on the one hand, the latter definition requires an introduction to measure theory.
On the other hand, we would need to express the difference between a pseudo-random hash function, which we use for constructing the addressing scheme $\add$, and a uniform distribution in terms of the parameter $\delta$, which is out-of-scope for this paper.
Thus, we restrict our goals to balancing the fraction of content addresses mapped to a node rather than the fraction of content mapped to the node.

\subsection{Dynamics and Stabilization}
Now, we assume that the topology of the graph changes over time.
Here, each topology change refers to the addition and removals of \emph{one} node or edge.
We model the graph topology over time as a stochastic process $(G_t)_{t \in \mathbb{N}}$ such that $G_t=(V_t,E_t)$ represents the graph after the $t$-th topology change.
When the topology changes, the embedding has to be adapted, so that we have a time-dependent embedding $id_t$. 
In contrast, we assume that the set of potential content $C$ and the addressing function $\add$ remain unchanged.
In order for the content addressable storage $(G_t,\id_t, C, \add)$ to continuously function effectively for all $t$, the embedding has to be adjusted to maintain greedy and balanced. We now define two properties for an embedding algorithm, before formally defining the concept of a content addressable storage in a dynamic scenario.

\begin{definition}
\label{def:dymEm}

\stefout{Let $\embed$ be an embedding algorithm generating content addressable greedy embeddings $\id$ based on spanning tree $T = (V, E^T), E^T \subseteq E$. 
Assume that we replace the subtree $T_u=(V_u, E^T_u)$ of $T$ rooted at $u \in V$ by a subtree $T'_u=(V'_u,E'_u)$ with the same root. 
The algorithm $\embed$ is called \emph{dynamic} if $\embed(T'_u,\id(u))$ generates content addressable greedy embeddings $\id'$ such that
}
{Let $\embed$ be an embedding algorithm for content addressable greedy embeddings $\id$ based on spanning trees $T = (V, E^T), E^T \subseteq E$.
We write $\embed(T,\emptyset)$ to indicate that $\embed$ is applied on the tree $T$. 
We call $\embed$ dynamic if we can compute $\embed(T_u, \id(u))$ on a subtree $T_u=(V_u, E^T_u)$ with $V_u \subseteq V, E^T_u \subseteq E^T$ rooted at a node $u$ such that
\begin{enumerate}
\item $\embed(T_u, \id(u))$ only changes coordinates of nodes $v \in V_u$,
\item the communication complexity of $\embed(T_u, \id(u))$ is $\calO(|V_u|)$, and
\item for any tree $T'_u = (V'_u, E^{T'}_u)$ rooted at $u$, graph $G'=(V', E')$ with $V'=V\setminus V_u \cup V'_u$ and spanning tree $T'=(V',E^{T'})$ with $E^{T'}=E^T\setminus E^T_u \cup E^{T'}_u$, $\embed(T'_u, \id(u))$ results in an embedding $\id'$ such that 
$$\forall x \in \X', \m(x) \in V_u \implies \m'(x) \in V'_u.$$
\end{enumerate}
Furthermore, $\embed$ is called \emph{dynamic $(f',\delta)$-balanced} if it is dynamic and
\begin{align}
\label{eq:dymCA}
\forall v \in V_u: \mu(\closest(v)) \leq \left(\sum_{v_0 \in V_u}\mu(\closest(v_0))\right)\frac{f'}{|V_u|}+\delta. 
\end{align}
holds for any embedding generated by $\embed(T_u, \id(u))$.  
}
\end{definition}

In other words, Definition \ref{def:dymEm} requires an embedding algorithm to be able to re-embed local subtrees with changed nodes and edges such that the local embedding is balanced, covers the addresses of the previous embedding, and other nodes and their content addresses are unaffected.
Note that this local balance does not imply global balance.
If the combined fraction of coordinates $M_0=\sum_{v_0 \in V_u}\mu(\closest(v_0))$ mapped to the nodes in the subtree is disproportionally high in comparison to the number of nodes in the subtree, the fraction of coordinates mapped to each node in the subtree might exceed $f/n$. 
A stabilization algorithm should thus decide if the embedding algorithm $\embed$ can be applied locally or if the re-embedding has to consider additional nodes in order to balance the storage responsibilities.

\begin{definition}
\label{def:stab}
Let $\embed$ be a $(f', \delta)$-balanced embedding algorithm with $f' \leq f$.
A stochastic process $((G_t,\id_t)_{t \in \mathbb{N}}, C, \add, \stab(\embed))$ is called a \emph{dynamic $(f, \delta)$-balanced content addressable storage} if the \emph{stabilization algorithm} $\stab(\embed)$ ensures that 
$(G_t,\id_t, C, \add)$ is a $(f,\delta)$-balanced content addressable storage for all $t \in \mathbb{N}$. 
\end{definition}
Definition \ref{def:stab} allows the stabilization algorithm to be parameterized by the embedding algorithm.
In this manner, we allow for a general stabilization algorithm that calls upon a variable dynamic embedding algorithm.

\section{Algorithm Design}
\label{sec:algo}

In this section, we develop an efficient stabilization algorithm that can restore a $(\calO(D),\delta)$-balanced content addressable storage after a topology change with $D$ denoting an upper bound on the spanning tree depth. 
We first consider the algorithm design from a high-level point of view.
More precisely, we show that we can construct such a stabilization algorithm $\stab(\embed)$ on the basis of any dynamic $(1,\delta)$-balanced embedding algorithm $\embed$.
We then present a concrete algorithm $\embed$ for our evaluation.
Last, we introduce potential variations and improvements of our algorithm for practical use. 

\subsection{$Stabilization$}

The key idea of algorithm $\stab(\embed)$ is that a node $u$ can locally decide if it re-embeds its subtree or forwards a request for re-embedding to its parent. 
Throughout this section, let $T_u=(V_u, E_u)$ denote the subtree rooted at $u$.
In order to decide if a local re-embedding is possible, $u$ maintains an estimate $n_{est} \in [n/g,ng]$ of the number of nodes $n$ in the network.
Furthermore, $u$ keeps track of its number of descendants $|V_u|$ as well as the combined fraction $cont(V_u)=\sum_{v \in V_u}\mu(\closest(v))$ of addresses assigned to nodes in $V_u$.  
Similarly, for each child $v$, $u$ keeps track of $|V_v|$. 
We first describe the idea of how the dynamic re-embedding in the presence of topology changes works. 
Then, we detail how to obtain the required knowledge for making the local decision to re-embed. \longonly{Last, we present the pseudocode of the stabilization algorithm.}  

\paragraph{Maintaining Stability} We aim to maintain a $(f, \delta)$-balanced content addressable storage with $f=\calO(D)$ in the presence of topology changes. We assume that there exists a $(1, \delta)$-balanced embedding algorithm $\embed$. 
The topology change and subsequent spanning tree stabilization either replaces a subtree $T_u$ with a subtree $T'_u=(V'_u, E'_u)$ or construct a new spanning tree. We focus on the former case as the latter requires re-embedding the complete graph. 
If $u$ re-embeds locally, i.e., applies the algorithm $\embed$ only to $T'_u$, we have $cont(V'_u)=cont(V_u)$ by the third condition in Definition \ref{def:dymEm}.     
Then Eq. \ref{eq:dymCA} states that the maximal fraction of addresses assigned to any node $v$ in $V'_u$ is
\begin{align}
\label{eq:clv}
\mu(\closest'(v)) \leq \frac{cont(V_u)}{|V'_u|} + \delta,
\end{align}
because $\embed$ is $(1,\delta)$-balanced. 
We can express Eq. \ref{eq:clv} in the form $\frac{f'}{n}+\delta$ with $f'=n \cdot cont(V_u)/|V'_u|$.
If indeed $n_{est} \in [n/g,ng]$ for a global parameter $g$, we have $n \leq n_{est}g$ and hence
$f' \leq n_{est}g \frac{cont(V_u)}{|V'_u|}$.
Thus, if for some $f(u)\leq f$
\begin{align}
\label{eq:criterion}
n_{est}g \frac{cont(V_u)}{|V'_u|} \leq f(u), 
\end{align}
re-embedding locally guarantees that $\mu(\closest(v)) \leq \frac{f}{|V|}+\delta$ for all $v \in V'_u$, so that we indeed maintain a $(f, \delta)$-balanced content addressable storage.  
If Eq. \ref{eq:criterion} does not hold, $u$ cannot guarantee that local re-embedding maintains a $(f(u), \delta)$-balanced content addressable storage. Then $u$ contacts its parent $p(u)$ with a request for re-embedding. The node $p(u)$ decides if it should re-embed locally, changing the coordinates within subtrees rooted at $u$ and its siblings, or if it relays the request to its own parent.
In this manner, nodes might forward the request for re-embedding to the root who can always re-embed such that the resulting content addressable storage is $(1, \delta)$- and hence $(f , \delta)$-balanced.

It remains to consider how to choose $f(u)$. 
As stated above, we need to ensure that $f(u)\leq f$. On the first glance, the choice $f(u)=f$ seems suitable as it maximizes the probability that Eq. \ref{eq:criterion} holds. However, if indeed $f=f(u)$, the re-embedding might only barely restore the desired balance. Any further change affecting any of the subtrees might thus lead to an immediate need for another re-embedding. In order to allow to reduce the frequency of the re-embedding, we thus choose a level-dependent $f(u)$. More precisely, a parent $v$ provides an embedding with a lower balance factor than the child node $u$, i.e.,  $f(v)< f(u)$. In this manner, the probability that $u$ has to contact its parent for a further re-embedding shortly after such an re-embedding decreases.
In order to maintain an overall balance factor $f=\calO(D)$, we choose the local balance factor corresponding to the level of the node in the tree, i.e.,
\begin{align}
\label{eq:fu}
f(u) = g(1+c+level(u)).
\end{align}
Using the size approximation accuracy $g$ as factor ensures that a re-embedding is not only necessary due to the uncertainty about the current global state despite a good balance in the subtree. The \emph{tree depth offset} $c$ allows a trade-off between the accepted level of imbalance and the stabilization overhead. So, an increased parameter $c$ implies that the maximal fraction of content per node can be high but might reduce the frequency of coordinate changes.

\paragraph{Updating State Information} The estimate $n_{est}$ as well as the quantities $|V_v|$ and $cont(V_u)$ are essential to check if Eq. \ref{eq:criterion} holds.
%
The fraction $cont(V_u)$ depends on the nature of the coordinate space $\X$ and the address space $\X' \subset \X$. Thus, computing them depends on the nature of the embedding algorithm $\embed$ and the addressing scheme $\add$. Here, we thus only describe how to obtain $cont(V_u)$ during the design of $\embed$. 
Here, we focus on maintaining the network size estimate $n_{est}$. In the process, we also obtain and maintain the subtree sizes $|V_v|$.
Note that $n=|V_r|$ for the root $r$.
Upon initialization, we derive the network size $n=|V|=|V_r|$ recursively. Each node $v$ forwards the size $V_v$ to its parent, starting at leaves, which send $|V_v|=1$. As soon as a node $u$ has received $|V_v|$ from all its children $v$, $u$ sends 
$1+\sum_{v\in children(u)} |V_v|$ to its parent. 
Finally, $r$ obtains the current network size and broadcasts it to all nodes along the edges of the tree. 
Later on, whenever a node $u$ accepts an additional child or one of its children departs, $u$ sends the new value of $|V_u|$ to its parent.
All subtree sizes along the path to the root are subsequently updated. 
After the root node has updated its local state, it checks if the current value for $|V_r|= n$ and the global estimate $n_{est}$ still satisfy 
$n_{est} \in [n/g, ng]$. If not, $r$ broadcasts the new estimate and at the same time runs the re-embedding algorithm.
\shortonly{Algorithm \ref{algo:stabembed} describes a node $u$'s reaction  to a topology change in $T_u$, covering both the network size estimation maintenance and the embedding, as exemplarily illustrated in Fig. \ref{fig:sa_example}.
}

\begin{algorithm}
\caption{$\stab(\embed)(u,v, |V'_v|,b)$}
\label{algo:stabembed}
\begin{algorithmic}[1]
{\small
\STATE \COMMENT{$u$: node, $v$: child of $u$; $|V'_v|$: updated $|V_v|$, $b$: forward flag}
\STATE \COMMENT{Global: balance factors $f$, $g$, $c$; size estimate $n_{est}$; Alg. $\embed$}
\STATE \COMMENT{State $u$: content $cont(V_u)$; subtree sizes $|V_v|$; parent $p(u)$}
\STATE $|V_v| = |V'_v|$ \label{l:Vv} 
\STATE $|V_u| = 1 + \sum_{v\in children(u)}|V_v|$ \label{l:Vu}
\IF{$u$ is root} \label{l:rootS}
  \IF{$|V_u| < n_{est}/g$ or $|V_u| > n_{est}g$ or not $b$}   
    \STATE $n_{est}=|V_u|$    
    \STATE $\embed(u)$,
    \STATE Broadcast $n_{est}$ \label{l:rootE}
  \ENDIF  
\ELSIF{$b$}
   \STATE $\stab(\embed)(p(u),u, |V_u|,b)$ \label{l:forward}
\ELSE
   \IF{$n_{est}g \frac{cont(V_u)}{|V_u|} \leq g(1+c+level(u))$} \label{l:check}
  \STATE $\embed(u)$ \label{l:em1}
  \STATE $\stab(\embed)(p(u),u, |V_u|,true)$ \label{l:em2}
 \ELSE
  \STATE $\stab(\embed)(p(u),u, |V_u|,false)$  \label{l:relay}
 \ENDIF
\ENDIF
}
\end{algorithmic}
\end{algorithm}  

\longonly{\paragraph{Pseudocode} Algorithm \ref{algo:stabembed} displays the pseudo code governing a node $u$'s reaction  to a topology change in $T_u$.
The algorithm combines the decision for re-embedding with updates of local state information.
The input of the algorithm is the current node $u$, the child $v$ that is affected by the change, the new value for $|V_v|$, and
a flag $b$ indicating that the re-embedding has already been taken care of. Hence, if $b$ is true, $u$ only has to forward the updated
subtree sizes to the root.  
The system parameters are the balance factor $f$ and the estimation quality $g$. In addition, each node stores the same global network size estimate $n_{est}$. The local state at $u$ includes the fraction of content $cont(V_u)$, the number of nodes in subtrees rooted at its children, and the parent $p(u)$. 
In Lines \ref{l:Vv} and \ref{l:Vu}, $u$ updates its information regarding the subtree sizes. 
Lines \ref{l:rootS}-\ref{l:rootE} specify the behavior of the root.
Note that the root can always generate an $(f,\delta)$-balanced content addressable storage, so that checking Eq. \ref{eq:criterion} are not necessary.  Rather, the root calculates a new estimate $n_{est}$ and re-embeds the graph whenever the old estimate is not accurate enough or its descendants have been unable to locally re-embed, i.e., if $b$ is false.
If $u$ is not the root of the tree, $u$ first checks the flag $b$. If $b$ is true, $u$ merely relays the updated subtree sizes to the parent (Line \ref{l:forward}).
Otherwise, $u$ has to decide if it locally re-embeds or relays the request for re-embedding to its parent. 
The decision in Line \ref{l:check} follows Eq. \ref{eq:criterion}. 
If the local state information satisfies Eq. \ref{eq:criterion}, $u$ executes the embedding algorithm on $T_u$ and forwards the updated subtree sizes to its parent. Furthermore, $u$ sets the flag $b$ to true (Lines \ref{l:em1} and \ref{l:em2}).
If $u$ cannot maintain the necessary balance by locally re-embedding, $u$ forwards the updated subtree sizes to the parent together with $b$ set to false, indicating the need for a re-embedding (Line \ref{l:relay}).
In this manner, Algorithm \ref{algo:stabembed} recursively restores a $(f,\delta)$-balanced content addressable storage.}

\begin{figure*}[!t]

   \centering
   \subfloat[Perfect initial balance under $v_1$.]{
      \includegraphics[width=0.24\textwidth]{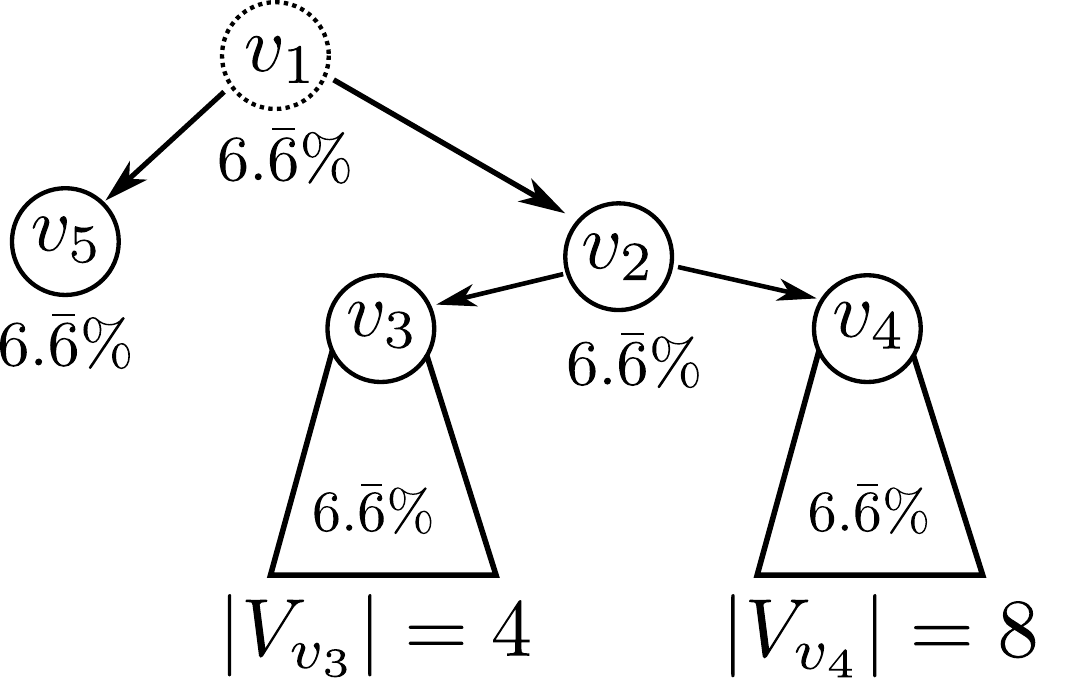}
      \label{fig:sa_example_a}
   }
   \hfill
   \subfloat[Upon departure of 5 children of $v_4$.]{
      \includegraphics[width=0.24\textwidth]{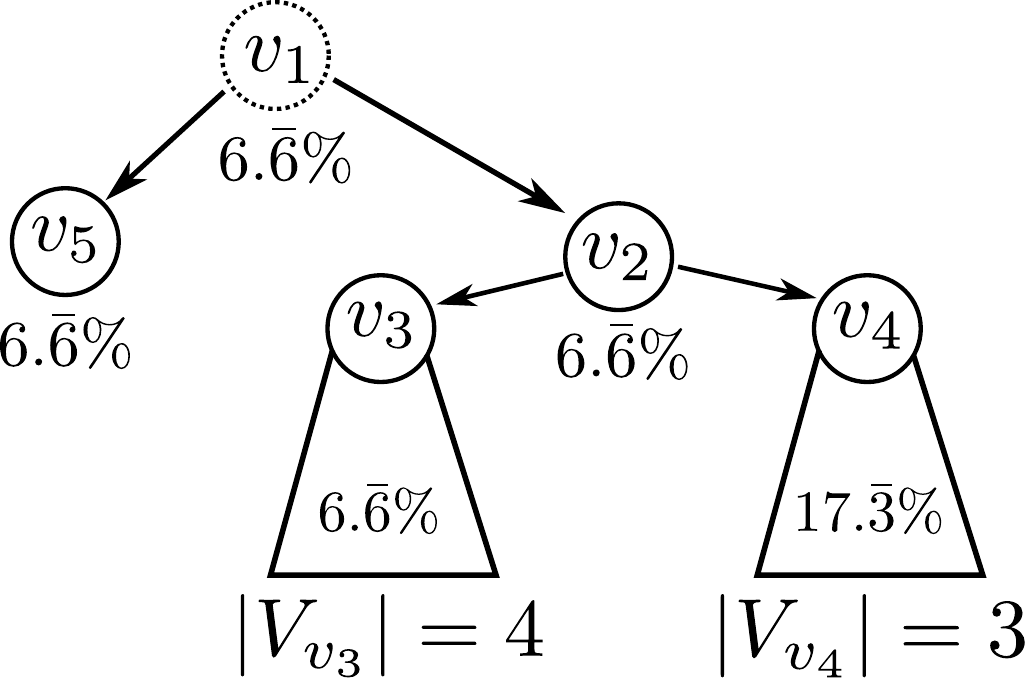}
      \label{fig:sa_example_b}
   }
   \hfill
   \subfloat[Re-balancing on $6^{th}$ departure at $v_4$.]{
      \includegraphics[width=0.24\textwidth]{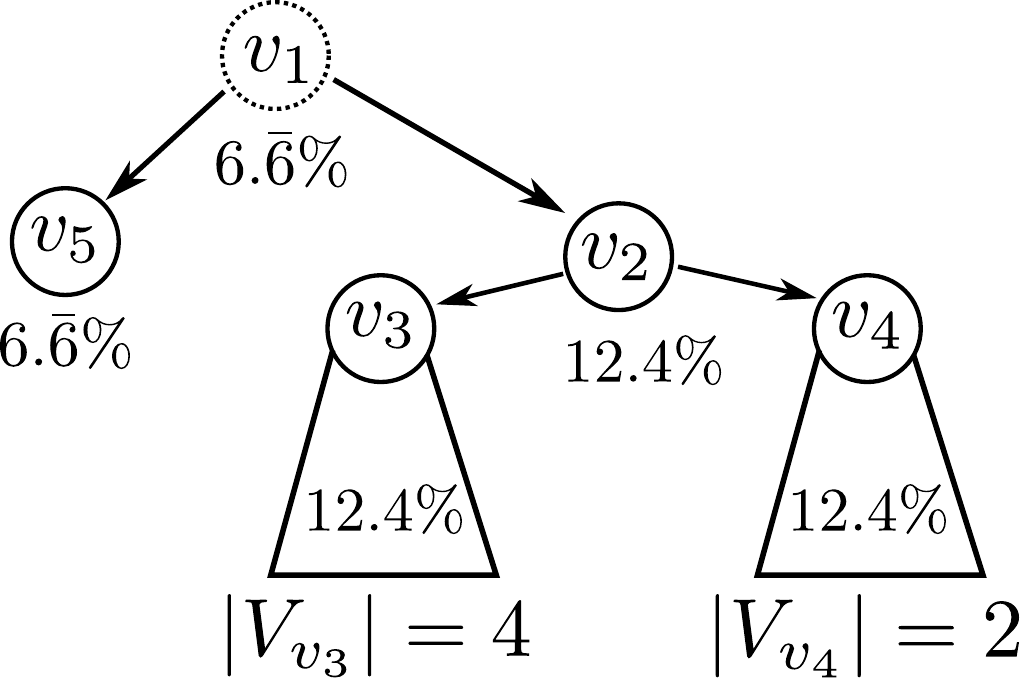}
      \label{fig:sa_example_c}
   }
   \caption{$\stab(\embed)$ on 15 nodes (triangles below $v_3$ and $v_4$ denote branches of $|V_{v3}|$ and $|V_{v4}|$ nodes), $g = 2$ and $c = 0$:
Arrows denote parent child relationships.
Percentages given for branches denote the fraction allocated to each node in the branch, percentage given for top nodes the fraction allocated to the respective node; 
Fig. \ref{fig:sa_example_a}: the content addressing initially allocates $\nicefrac{1}{15}^{th}$ to each node, achieving perfect balance.
Departure of 5 children under $v_4$ is balanced by $v_4$ (Fig. \ref{fig:sa_example_b}), allocations increase to $17.3\%$ per node.
The sixth departure triggers escalation and the re-embedding request is relayed to $v_2$ (Fig. \ref{fig:sa_example_c}).
}
   \label{fig:sa_example}
   \vspace{-1em}
\end{figure*}

We consider potential variations and speedups for the use of $\stab(\embed)$ in practice in Section \ref{sec:vary}. 

\subsection{Embedding}
Our embedding algorithm is a modification of the unbalanced content addressing scheme for Prefix Embedding \cite{hofer2013greedy}. 
In a nutshell, the idea of our algorithm is to count each node in Prefix Embedding as multiple nodes. 

Prefix Embedding, a variation of PIE \cite{HerzenEtAl11}, encodes the position of a node with regard to the root of the tree.
More precisely, each node $u$ enumerates the edges to its children. 
The coordinate then corresponds to the vector of edge numbers on the unique path from the root to the respective node.
The distance between two such coordinates corresponds to the sum of the length of their coordinate vectors, subtracting twice the number of leading equal elements (the common prefix).
In this manner, the distance between two node coordinates equals the length of the path between the two nodes in the spanning tree.

We modify Prefix Embedding by replacing the numerical elements of the vectors with sets of integers in an interval. The length of each interval depends on the number of nodes in the corresponding subtree. 
So, a node on level $l$ divides the space of $2^b$ numbers for the $l+1$-th element of the coordinate vector evenly between itself and its descendants, as displayed in Algorithm \ref{algo:embed}.
More precisely, a node $u$ receives a prefix for all nodes in $V_u$ from its parent, starting with an empty prefix at the root. 
The prefix corresponds to $u$'s own coordinate $\id(u)$ and consists of $l$ intervals. 
After receiving its coordinate, $u$ assigns coordinates consisting of $l+1$ intervals to its children. 
For the $i$-th child $v_i$, $u$ adds the set of integers in the interval 
$\left[\sum_{j=1}^{i-1}\frac{|V_{v_j}|}{|V_u|}2^b,\sum_{j=1}^{i}\frac{|V_{v_j}|}{|V_u|}2^b\right]$, which has cardinality of at most 
$\lceil \frac{|V_{v_i}|}{|V_u|}2^b \rceil$ (Lines \ref{l:childS}-\ref{l:childE}).  The subtree rooted at the child is then recursively embedded. 

We now formally derive the coordinate space and the distance function for the assigned coordinates. 
First, we denote the set of all integers within an interval 
$[z_1,z_2)$ by $\integers(z_1, z_2) = \{ i : i \in [z_1,z_2), i \in \mathbb{Z}\}$. Furthermore, let $IC= \{\integers(z_1, z_2): z_1, z_2\in [0,2^b), z_2 \geq z_1\}$ denote the set of all such sets with 
$0 \leq z_1 \leq z_2 < 2^b$. Then, our coordinate space $\X=IC^*$ corresponds to all vectors with entries in $IC$.
The distance between two node coordinates is analogous to Prefix Embedding: the difference of the sum of the coordinates lengths and twice the common prefix length. However, in order to allow for balanced content addressing, we use a slightly different concept than the common prefix length to compare vector elements.
Rather than only considering equal elements, we consider two sets a match if one is contained in the other. 
Formally, let $I_1$ and $I_2$ denote two sets of integers. Then we set $sub(I_1,I_2)=true$ if $I_1\subseteq I_2$ or $I_2 \subseteq I_1$.
As a consequence, we denote the \emph{contained interval length} of two vectors $x_1, x_2 \in \X$ as
$cil(x_1,x_2)=\max \{j \in \{0, \ldots , \min\{D(x_1), D(x_2)\}\}: sub(x_1(j),x_2(j))\}$.
Hence, the distance between two coordinates is 
\begin{align}
\label{eq:d}
\dist(x_1,x_2)= D(x_1)+D(x_2)-2\cdot cil(x_1, x_2).
\end{align}
with $D(x)$ denoting the dimension of a vector $x$.  

Next, we consider the file address generation $\add$ for files $c \in C$.
Typically, $\add$ corresponds to a hash function $h: C \rightarrow H=\mathbb{Z}_{2^b}$. 
However, as our coordinates are vectors, we choose the address space $\X'=\{\{a\}: a \in \mathbb{Z}_{2^b}\}^L$ corresponding to vectors of a fixed length $L$ with $L\geq D$ exceeding the spanning tree depth. 
We use multiple salted hashes to obtain the address of a file $c$, i.e., the $i$-th element of $\add(c)$ is $y=(\{y_1\}, \ldots \{y_L\})$ for $y_i=h(c + i)$.
The file $c$ is then stored at the node with the closest coordinate $\id(u)$ to $\add(c)$ according to Eq. \ref{eq:d}.
The node $u$ can be located using greedy routing by forwarding a request to store or retrieve $\add(c)$ to the closest neighbor until no such neighbor exists. 
If the coordinate of $u$ changes due to the dynamics, $u$ needs to start a new storage request for $c$ to ensure that the file is indeed stored on the closest node. 
We now show that $\embed$ is greedy content addressable, $(1,\delta)$-balanced, and hence any $(G,\id, C, \add)$ is a content addressable storage.

\begin{algorithm}
\caption{$\embed(T_u, \id(u))$}
\label{algo:embed}
\begin{algorithmic}[1]
{\small
\STATE \COMMENT{$T_u$:subtree to embed, $Ic(z_1,z_2)$: integers in $[z_1,z_2)$}
\STATE \COMMENT{$b$: length of coordinate elements, $||$: concatenation}
\STATE $ol=0$
\STATE $next=0$
\FOR{$v \in children(u)$} \label{l:childS}
\STATE $next=next+|V_v|$
\STATE $\id(v)=id(u)||\integers(\frac{ol}{|V_u|}2^b , \frac{next}{|V_u|}2^b)$ \label{l:interval}
\STATE $\embed(T_v, \id(v))$
\STATE $ol=next$\label{l:childE}
\ENDFOR
}
\end{algorithmic}
\end{algorithm}

\begin{prop}
The dynamic embedding algorithm $\embed$ is content addressable greedy. 
If $L \leq b$ with $L$ being the upper bound on the tree depth, then $\embed$ is $(1,\frac{L+1}{2^b})$-balanced.
\end{prop}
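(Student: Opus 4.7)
The plan splits into two parts, one per assertion.

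\textbf{Content-addressable greedy.} I would first observe that the intervals assigned to siblings by Algorithm~\ref{algo:embed} are pairwise disjoint (by the $ol\to next$ bookkeeping) and that each child's coordinate extends its parent's by exactly one interval. This lets me associate to every address $x'=(\{a_1\},\ldots,\{a_L\})\in\X'$ a unique \emph{descent path} from the root: at a depth-$l$ node, proceed to the (necessarily unique) child whose $(l+1)$-st interval contains $a_{l+1}$, and stop when no such child exists; let $u^*(x')$ be the terminal node. A direct check using Eq.~\ref{eq:d} shows that $cil(x',\id(u^*))$ equals the depth of $u^*$, while $cil(x',\id(v))$ is strictly smaller for every other $v$ (a descendant of $u^*$ increases $D(\id(\cdot))$ without raising $cil$; a node off the descent path has $cil$ capped by the divergence depth). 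Hence $|\m(x')|=1$. Greedy routability then follows from the tree-based construction: the graph neighbor of $v\neq u^*$ lying on the tree path to $u^*$ reduces the coordinate distance by exactly one.

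\textbf{Balance.} For $v$ at depth $l_v$ with coordinate $(I_1,\ldots,I_{l_v})$, an address $x'$ lies in $\closest(v)$ iff the descent above terminates at $v$, i.e., $a_j\in I_j$ for $j\leq l_v$ and (if $l_v<L$) $a_{l_v+1}$ avoids every child's interval. Combinatorial counting gives
\begin{equation*}
|\closest(v)| = \Bigl(\prod_{j=1}^{l_v}|I_j|\Bigr)\cdot S_v \cdot 2^{b(L-l_v-1)},\quad S_v = 2^b - \sum_{w\in children(v)}|I_w|.
\end{equation*}
Telescoping Algorithm~\ref{algo:embed}'s recursion at $v$ yields $\sum_w|I_w|=\lceil(|V_v|-1)\cdot 2^b/|V_v|\rceil$, whence $S_v\leq 2^b/|V_v|$. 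Writing $\alpha_v=\prod_{j=1}^{l_v}|I_j|/2^b$, we have $\mu(\closest(v))=\alpha_v\cdot S_v/2^b \leq \alpha_v/|V_v|$. In the idealized no-rounding case $|I_j|/2^b=|V_j|/|V_{j-1}|$, the product telescopes to $\alpha_v=|V_v|/n$ and the bound reads $1/n$ exactly. For the actual algorithm, $|I_j|/2^b\leq |V_j|/|V_{j-1}|+1/2^b$, giving
\begin{equation*}
\alpha_v\leq \frac{|V_v|}{n}\prod_{j=1}^{l_v}\Bigl(1+\frac{|V_{j-1}|}{|V_j|\cdot 2^b}\Bigr).
\end{equation*}
A key observation is $|V_{j-1}|\geq|V_j|+1$ (the parent's subtree strictly contains the child's), so each factor is at most $1+2/2^b$. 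Under the hypothesis $L\leq b$ the product is then bounded by $\exp(2L/2^b)=1+O(L/2^b)$, and the additive excess divided by $|V_v|$ is absorbed into the slack, yielding $\mu(\closest(v))\leq 1/n+(L+1)/2^b$.

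\textbf{Main obstacle.} The technically delicate step is the rounding analysis: the product $\prod_j(|V_j|/|V_{j-1}|+1/2^b)$ can a priori blow up because the ratios $|V_{j-1}|/|V_j|$ have no constant upper bound on a general tree. The cleanest resolution is the observation $|V_{j-1}|\geq|V_j|+1$, which bounds each correction factor by $1+2/2^b$; the hypothesis $L\leq b$ then folds the accumulated product correction into the single additive slack $(L+1)/2^b$. The greedy and uniqueness parts are routine consequences of the prefix structure, so the balance calculation is where the work concentrates.
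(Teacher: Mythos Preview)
Your greedy-routing argument via the descent path is correct and is a direct alternative to the paper's reduction to Prefix Embedding; either route works.

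The balance argument, however, contains a genuine error at exactly the step you flag as the ``cleanest resolution.'' From the factorization
\[
\alpha_v \;\leq\; \frac{|V_v|}{n}\prod_{j=1}^{l_v}\Bigl(1+\frac{|V_{j-1}|}{|V_j|\cdot 2^b}\Bigr)
\]
you claim each factor is at most $1+2/2^b$ because $|V_{j-1}|\geq |V_j|+1$. But that inequality is a \emph{lower} bound on $|V_{j-1}|/|V_j|$, not an upper bound; the ratio $|V_{j-1}|/|V_j|$ is unbounded on general trees (e.g., a star with root $v_0$ and $n-1$ leaf children has $|V_0|/|V_1|=n$). So your product can be of order $1+n/2^b$ at a single level, and after dividing by $|V_v|$ you do not recover the claimed $(L+1)/2^b$ slack. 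The step simply fails.

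The paper avoids this trap by \emph{not} factoring out the telescoping product. It works directly with
\[
\prod_{i=1}^{D(x)}\Bigl(\tfrac{|V_{v_i}|}{|V_{v_{i-1}}|}\,2^b+1\Bigr)
\]
and uses the bound in the \emph{other} direction, namely $|V_{v_i}|/|V_{v_{i-1}}|\leq 1$, to dominate the product by $(2^b+1)^{D(x)}$. Expanding as $\sum_i c_i(2^b)^i$, the leading coefficient $c_{D(x)}$ telescopes exactly to $|V_v|/n$, while the lower-order coefficients are bounded by binomials $\binom{D(x)}{i}$; the hypothesis $L\leq b$ (so $2^{D(x)}\leq 2^b$) then collapses all subleading terms into $(L+1)(2^b)^{L-1}$. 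In short: the controllable ratio is child-over-parent, not parent-over-child, and the right move is a polynomial expansion in $2^b$ rather than the multiplicative perturbation you attempt.
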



\begin{proof}
\shortonly{[Sketch, we omit the technical details of some steps and focus on the key ideas, see \cite{longvers} for a detailed proof]}

In order to show that $\embed$ is content addressable greedy, we leverage the corresponding results for Prefix Embedding.
In the main part of the proof, we show that $\embed$ is $(1,\frac{L+1}{2^b})$-balanced by determining $f$ and $\delta$ as in Definition \ref{def:fbalance}. For this purpose, we first determine an upper bound for $\closest(v)$ based on Eq. \ref{eq:d}
and then leverage this upper bound to confirm that $\delta \leq \frac{L+1}{2^b}$ for $f=1$.

\longonly{At first, we show that the coordinate spaces $\X$, $\X'$ as well as the distance $\dist$ defined in Eq. \ref{eq:d} can be mapped to a Prefix Embedding. Before showing that $\embed$ is content addressable greedy, we present an alternative interpretation of the coordinate spaces $\X$ and $\X'$ and the distance $\dist$ defined in Eq. \ref{eq:d}.
It is easy to see that $\X'$ and $\mathbb{Z}^L_{2^b}$ are equivalent: we map a vector of singletons $x'=(\{a_1\}, \ldots , \{a_l\})$ to a vector $\tilde{x}'=(a_1,\ldots , a_l)$.
Similarly, we can associate $\X$ with a subset $Y$ of $\mathcal{P}\left(\mathbb{Z}_{2^b}^*\right)$ with $\mathcal{P}$ indicating the power set, i.e., the set of all subsets.
Here, we map a vector $x=(Ic(x_1(1),x_1(2)), \ldots , Ic(x_l(1),x_l(2)))$ to the set $\tilde{x}=\{(y_1, \ldots y_l): y_i \in Ic(x_i(1),x_i(2))\}$. 
Now, $\{a_i\} \subseteq Ic(x_i(1),x_i(2))$ holds iff there exists $y \in \tilde{x}$ with $y_i=a_i$. 
For $c=\max_{y \in \tilde{x}} cpl(y,\tilde{x}')$ and by the definition of $cil(x, x')$, we have 
$cil(x,x')\leq c$ and $cil(x,x')\geq c$, so that $cil(x,x')=\max_{y \in \tilde{x}} cpl(y,\tilde{x}')$. 

Leveraging the alternative definition of $cil$, we show that $\embed$ is content addressable greedy by relating $\embed$ to Prefix Embedding. Consider a node $u$ with coordinate $\id(u)=\left(Ic(x^u_1(1),x^u_1(2)), \ldots , Ic(x^u_l(1),x^u_l(2))\right)$. 
Replace $u$ with a set of nodes $ID(u)$ of size $\prod_{i=1}^l|Ic(x^u_i(1),x^u_i(2))|$ and assign each node $u' \in ID(u)$ a unique coordinate $\id_{PRE}(u')=(y_1, \ldots ,y_l)$ with $y_i \in Ic(x^u_i(1),x^u_i(2))$.
For every neighbor $v$ of $u$, connect $u'$ with all $v' \in ID(v)$. 
In particular, we have an edge between $u'$ and all $v' \in ID(v)$ such that $\id_{PRE}(v')=(y_1, \ldots ,y_l,z)$ for some integer $z$.
Thus, the resulting embedding $\id_{PRE}$ is an instance of Prefix Embedding and hence content addressable greedy as shown in \cite{hofer2013greedy}.
As a consequence, greedy routing for an address $\tilde{x}' \in \mathbb{Z}^L_{2^b}$ traverses a path $(v'_0, \ldots , v'_t)$ such that $\id_{PRE}(v'_t)$ is closest to $\tilde{x}'$ in terms of $\distPre(\tilde{x}',\id_{PRE}(v'_t))=D(\tilde{x}')+D(\id_{PRE}(v'_t))-2cpl(\tilde{x}',\id_{PRE}(v'_t))$.
For the equivalent address $x' \in \X'$, we have $cil(x',\id(v))=\max_{v' \in ID(v)}cpl(\tilde{x}',\id_{PRE}(v'))$ and $\dist(x',\id(v))=\min_{v' \in ID(v)}\distPre(\tilde{x}',\id_{PRE}(v'))$.
Hence, greedy routing for an address in the embedding $\id$ traverses the path  $(v_0, \ldots , v_t)$ and terminates at the node with the closest coordinate to $x'$.
Thus, $\embed$ is content addressable greedy.}

\shortonly{Note that $\embed$ corresponds to a Prefix Embedding when we represent each node $u$ as a set of nodes $U$ with integer-valued vectors as coordinates. The coordinate of a node $u' \in U$ is chosen such that the $i$-th element is contained in $Ic(x^u_i(1),x^u_i(2))$ and $U$ is the union of all nodes with such coordinates. As PrefixEmbedding is content addressable greedy \cite{hofer2013greedy}, $\embed$ is content addressable greedy.} 

Now, we show that $\embed$ is $(1,\frac{L}{2^b})$-balanced.
\longonly{First, we derive the cardinality $|\closest(v)|=\mu(\closest(v))\cdot (2^b)^L$. Let $x=id(v)$.
By the definition of the distance $\dist$ in Eq. \ref{eq:d}, $\closest(v)$ consists of all vectors $x' \in X'$ such that i) $cil(x,x')=D(x)$, and ii) there is no child $u$ of $v$ with $cil(\id(u),x')>D(x)$.
We extend $x$ to a coordinate $x_L$ of length $L$ such that $\closest(v)=\{x'\in \X': \dist(x',x_L)=0\}$ by adding elements $Ic(x^v_{i}(1),x^v_{i}(2))$ for $i>D(x)$.
With
\begin{align*}
z_{m}(v) &= \begin{cases}
0, &children(v)=\emptyset \\
\max_{w \in children(v)}x^w_{D(x)+1}(2), &\textnormal{otherwise}
\end{cases}
\end{align*}
   denoting the smallest integer such that $z_{m} \notin Ic(x^w_{D(x)+1}(1),x^w_{D(x)+1}(2))$ for any child $w$ of $v$, we set $Ic(x^v_{D(x)+1}(1),x^v_{D(x)+1}(2))=Ic(z_{m}(v), 2^b)$. Furthermore, for $i > D(x)+1$, we set $Ic(x^v_{i}(1),x^v_{i}(2))=Ic(0,2^b)$ to cover all possible elements. On the one hand, $x'=(\{a_1\}, \ldots , \{a_L\}) \in \closest(v)$ implies $a_i \in Ic(x^v_i(1), x^v_i(2))$ for all $i$ and hence $\dist(x_L,x')=0$.
On the other hand, $x' \notin \closest(v)$ implies $a_i \notin Ic(x^v_i(1), x^v_i(2))$ for some $i\leq D(x)+1$ and hence $\dist(x_L,x')\neq 0$.
So, indeed $\closest(v)=\{x'\in \X': \dist(x',x_L)=0\}$ and hence
\begin{align}
\label{eq:bva}
\begin{split}
|\closest(v)| &= \prod_{i=1}^{L}|Ic(x^v_i(1), x^v_i(2))| \\
              &=2^{b(L-D(x)-1)}\prod_{i=1}^{D(x)+1}|Ic(x^v_i(1), x^v_i(2))|.
\end{split}              
\end{align}

In the following, let $v_l$ denote the ancestor of $v$ on level $l$. In particular, $v=v_{D(x)}$. 
   As stated in Line \ref{l:interval} of Algorithm \ref{algo:embed}, we have $Ic(x^v_i(1), x^v_i(2)) = \integers(\frac{r}{|V_{v_{i-1}}|}2^b, \frac{r + |V_{v_i}|}{|V_{v_{i-1}}|}2^b)$ for $i \leq D(x)$ with some $r \in [0, |V_{v_{i-1}}| - |V_{v_{i}|})$. This implies an upper bound $|Ic(x^v_i(1), x^v_i(2))| \leq \lceil \frac{|V_{v_i}|}{|V_{v_{i-1}}|} 2^b \rceil \leq \frac{|V_{v_i}|}{|V_{v_{i-1}}|}2^b + 1$. For $i=D(x)+1$, the number of integers not assigned to any of the children is bound by
$\frac{1}{|V_v|} 2^b +1$. Inserting these bounds into Eq. \ref{eq:bva} yields
\begin{align}
\label{eq:bv}
|\closest(v)| \leq 2^{b(L-D(x)-1)} \left(\frac{1}{|V_v|} 2^b +1\right) \prod_{i=1}^{D(x)}\left(\frac{|V_{v_i}|}{|V_{v_{i-1}}|} 2^b + 1\right).
\end{align}

In the second step of the proof, we derive $\delta$ from Eq. \ref{eq:bv}.
For this purpose, we write 
\begin{align}
\label{eq:ci}
\prod_{i=1}^{D(x)}\left(\frac{|V_{v_i}|}{|V_{v_{i-1}}|}2^b + 1\right) = \sum_{i=0}^{D(x)} c_i (2^b)^i
\end{align}
with 
\begin{align*}
c_{D(x)}=\prod_{i=1}^{D(x)}\frac{|V_{v_i}|}{|V_{v_{i-1}}|}=\frac{|V_{v_{D(x)}}|}{|V_{v_{0}}|}=\frac{|V_{v}|}{n}. 
\end{align*}
As $\frac{|V_{v_i}|}{|V_{v_{i-1}}|}\leq 1$, we have $\prod_{i=1}^{D(x)}\left(\frac{|V_{v_i}|}{|V_{v_{i-1}}|}2^b + 1\right) \leq \left( 2^b + 1 \right)^{D(x)}=\sum_{i=0}^{D(x)} {D(x) \choose i} (2^b)^i$ and $c_i \leq {D(x) \choose i}$.
Consequently, we obtain upper bounds $c_{D_{x-1}}\leq D(x)\leq L $ and for $D(x)>1$
\begin{align*}
&\sum_{i=0}^{D(x)-2} c_i (2^b)^i \leq (2^b)^{D(x)-2} \sum_{i=0}^{D(x)-2} {D(x) \choose i} \\
&\leq (2^b)^{D(x)-2} \sum_{i=0}^{D(x)} {D(x) \choose i} = (2^b)^{D(x)-2} 2^{D(x)}\leq (2^b)^{D(x)-1}
\end{align*}
The last step uses $2^b \geq 2^L \geq 2^{D(x)}$.
Inserting Eq. \ref{eq:ci} and the upper bounds on the coefficients $c_i$ in Eq. \ref{eq:bv}, we obtain
\begin{align*}
|\closest(v)| \leq \frac{1}{n}\left(2^b\right)^L + L\left(2^b\right)^{L-1} + \left(2^b\right)^{L-1}.
\end{align*} 
Division by the number of addresses $|X'|=(2^b)^L$ shows that $\embed$ is indeed $(1,\frac{L+1}{2^b})$-balanced. 
}

\shortonly{
First, we derive the cardinality $|\closest(v)|=\mu(\closest(v))\cdot (2^b)^L$. Let $x=id(v)$.
By the definition of the distance $\dist$ in Eq. \ref{eq:d}, $\closest(v)$ consists of all vectors $x' \in X'$ such that i) $cil(x,x')=D(x)$, and ii) there is no child $u$ of $v$ with $cil(\id(u),x')>D(x)$.
We extend $x$ to a coordinate $x_L$ of length $L$ such that $\closest(v)=\{x'\in \X': \dist(x',x_L)=0\}$ by adding elements $Ic(x^v_{i}(1),x^v_{i}(2))$ for $i>D(x)$.
With
\begin{align*}
z_{m}(v) &= \begin{cases}
0, &children(v)=\emptyset \\
\max_{w \in children(v)}x^w_{D(x)+1}(2), &\textnormal{otherwise}
\end{cases}
\end{align*}
   denoting the smallest integer such that $z_{m} \notin Ic(x^w_{D(x)+1}(1),x^w_{D(x)+1}(2))$ for any child $w$ of $v$, we set $Ic(x^v_{D(x)+1}(1),x^v_{D(x)+1}(2))=Ic(z_{m}(v), 2^b)$. Furthermore, for $i > D(x)+1$, we set $Ic(x^v_{i}(1),x^v_{i}(2))=Ic(0,2^b)$ to cover all possible elements. 
We can then express the size of $\closest(v)$ as $|\closest(v)| =2^{b(L-D(x)-1)}\prod_{i=1}^{D(x)+1}|Ic(x^v_i(1), x^v_i(2))|$.
From now on, let $v_l$ denote the ancestor of $v$ on level $l$. 
   As stated in Line \ref{l:interval} of Algorithm \ref{algo:embed}, we have $Ic(x^v_i(1), x^v_i(2)) = \integers(\frac{r}{|V_{v_{i-1}}|}2^b, \frac{r + |V_{v_i}|}{|V_{v_{i-1}}|}2^b)$ for $i \leq D(x)$ with some $r \in [0, |V_{v_{i-1}}| - |V_{v_{i}|})$, implying an upper bound $|Ic(x^v_i(1), x^v_i(2))|  \leq \frac{|V_{v_i}|}{|V_{v_{i-1}}|}2^b + 1$, and 
\begin{align}
\label{eq:bv}
|\closest(v)| \leq 2^{b(L-D(x)-1)} \left(\frac{1}{|V_v|} 2^b +1\right) \prod_{i=1}^{D(x)}\left(\frac{|V_{v_i}|}{|V_{v_{i-1}}|} 2^b + 1\right).
\end{align}

In the second step of the proof, we derive $\delta$ from Eq. \ref{eq:bv}.
For this purpose, we write 
\begin{align}
\label{eq:ci}
\prod_{i=1}^{D(x)}\left(\frac{|V_{v_i}|}{|V_{v_{i-1}}|}2^b + 1\right) = \sum_{i=0}^{D(x)} c_i (2^b)^i
\end{align}
with $c_{D(x)}=\frac{|V_{v}|}{n}$. 
As $\frac{|V_{v_i}|}{|V_{v_{i-1}}|}\leq 1$, we have $\prod_{i=1}^{D(x)}\left(\frac{|V_{v_i}|}{|V_{v_{i-1}}|}2^b + 1\right) \leq \left( 2^b + 1 \right)^{D(x)}=\sum_{i=0}^{D(x)} {D(x) \choose i} (2^b)^i$ and $c_i \leq {D(x) \choose i}$.
Consequently, we obtain upper bounds $c_{D_{x-1}}\leq D(x)\leq L $ and for $D(x)>1$
\begin{align*}
&\sum_{i=0}^{D(x)-2} c_i (2^b)^i \leq (2^b)^{D(x)-2} \sum_{i=0}^{D(x)-2} {D(x) \choose i} \\
&\leq (2^b)^{D(x)-2} \sum_{i=0}^{D(x)} {D(x) \choose i} = (2^b)^{D(x)-2} 2^{D(x)}\leq (2^b)^{D(x)-1}
\end{align*}
The last step uses $2^b \geq 2^L \geq 2^{D(x)}$.
Inserting Eq. \ref{eq:ci} and the upper bounds on the coefficients $c_i$ in Eq. \ref{eq:bv}, we obtain
\begin{align*}
|\closest(v)| \leq \frac{1}{n}\left(2^b\right)^L + L\left(2^b\right)^{L-1} + \left(2^b\right)^{L-1}.
\end{align*} 
Division by the number of addresses $|X'|=(2^b)^L$ shows that $\embed$ is indeed $(1,\frac{L+1}{2^b})$-balanced.
}

\end{proof}

\subsection{Improvements and Variations}
\label{sec:vary}
There are multiple possibilities to slightly reduce the overhead of $\stab(\embed)$ in practice or achieve additional properties.
\shortonly{We discuss additional variants regarding the use of estimated subtree sizes and the adherence to heterogeneous node resources in our technical report \cite{longvers}\footnotemark[1]}. 

\longonly{
\textit{Delay before broadcasting new estimate:}
In Line \ref{l:rootE} of Algorithm \ref{algo:stabembed}, the root broadcasts a new estimate immediately after receiving an update.
However, a node or edge departure might result in a temporarily low estimate as the descendants of the departed nodes select alternative parents. Adding a short delay before reacting to a considerable change in the network size avoids broadcasting a new estimate without the actual need to do so. 
During our theoretical and practical evaluation, we assume that the root waits until its size estimate is accurate.  
}

\textit{No local re-embedding after joins:}
In Algorithm \ref{algo:stabembed}, the parent $u$ of a newly joined node $v$ re-embeds the complete subtree $V_v$. However, as long as the load is sufficiently balanced in the subtree, such an action might unnecessarily increase the overhead. Rather, $u$ can assign $v$ a preliminary coordinate in $\closest(u)$ by adding $Ic(max, (max+2^b)/2)$ to $v$'s coordinate with $max$ denoting the highest number assigned to a last coordinate of $u$'s children. 
In this manner, we postpone the re-embedding of the subtree until one of its children leaves or asks for a re-embedding.

\textit{Virtual binary trees:}
In addition to reducing the frequency of new coordinate assignments, the number of nodes affected by a re-embedding can be reduced by only
changing the coordinates within a subset of the subtrees rooted at children. For this purpose, we leverage the concept of virtual binary trees presented in \cite{hofer2013greedy}. Here, we represent a subgraph consisting of a parent and its children as a binary tree such that the children are the leaves and the parent executes the functionality of all internal nodes. 
In this manner, if a node $u$ receives a re-embedding request relayed from one of its children, $u$ first checks if first checks if it can balance a set of two or three subtrees. If re-embedding only those trees is possible according to Algorithm \ref{algo:stabembed}, the remaining subtrees remain unaffected. 
Otherwise, nodes subsequently considers subtrees at a lower level of the virtual binary tree until it can either locally re-embed or has to relay the request to its own parent.
As the structure of the virtual subtree rooted at a node changes whenever the number of children changes, the successive consideration only applied for nodes other than the parent of the joined or departed node.  
Note that the additional nodes of the virtual binary trees do not count into the network size but the the levels considered in Line \ref{l:check} of Algorithm \ref{algo:stabembed} correspond to the levels in the virtual tree in order to avoid short-lived re-embeddings.

\longonly{
\textit{Estimated subtree sizes:}
Algorithm \ref{algo:stabembed} relies on the actual sizes of subtrees. In particular, leaves reveal that they have no descendants. Revealing such topology information is potentially undesired in privacy-preserving communication systems such as F2F overlays and might reduce the anonymity. Hence, the subtree size can be obfuscated. For instance, rather than adding 1 for each node, each node can be counted as either 0,1, or 2. We obtain an unbiased estimate as long as the probabilities for 0 and 2 are equal. In order to avoid inferences over time, each node's count should be consistent. 
In our practical evaluation, we consider the impact of using such estimates. 

\textit{Heterogeneous node resources:} If the storage of nodes differs considerable, $\embed$ and $\stab(\embed)$ should consider such heterogeneous resources. We extend the above idea to not necessarily count each node with $1$. Instead, the count of a node and hence the expected assigned content should correspond to a node's resources. In other words, we replace the subtree sizes in Algorithms \ref{algo:stabembed} and \ref{algo:embed} with the overall storage capacity of the nodes in the subtrees. 
A detailed evaluation of the heterogeneous node resources is out-of-scope for this paper due to the lack of realistic models but should be considered in greater detail in the future. 
}

\shortonly{
\subsection{Performance Analysis}

We here state the main result of our theoretical analysis. A full proof can be found in \cite{longvers}. 

\begin{theorem}
Let $(G,\id,C, \add)$ be a content-addressable storage with a tree-based greedy embedding $\id$. The depth of the spanning tree
is at most $\calO(\log n)$. With $S$ denoting the number of siblings of a node, the algorithm $\stab(\embed)$ maintains a
$(\calO(\log n),\delta)$-balanced content-addressable storage at communication complexity
\begin{align*}
\E(cost(\stab(\embed)))=\calO\left(\log^3 n \E(S)\right).
\end{align*}
Using virtual binary trees, the algorithm  $\stab_{virt}(\embed)$ maintains a
$(\calO(\log^2 n),\delta)$-balanced content-addressable storage at communication complexity
\begin{align*}
\E(cost(\stab_{virt}(\embed)))=\calO\left(\log^6 n \right).
\end{align*}
\end{theorem}

}
\longonly{\section{Analysis}
\label{sec:analysis}

We designed a stabilization algorithm $\stab(\embed)$ for a dynamic $(\calO(D), \delta)$-balanced content addressable storage. 
Now, we derive the communication complexity of $\stab(\embed)$.  
\longonly{We start by showing an essential Lemma concerning the expected number of descendants.
We subsequently treat node joins and node departures separately.} 
Throughout the section, we assume that the depth of the tree is at most $D$ and we write $\E(X)$ for the expected value of some described random variable $X$. In particular, we write $\E(S)$ for the expected number of siblings of a random node.
We focus on the ideas of the proofs. The complete proofs can be found in our technical report \cite{longvers}. 

\begin{figure*}[ht]
\subfloat[Stabilization]{\label{fig:oriStab}\includegraphics[width=0.28\linewidth]{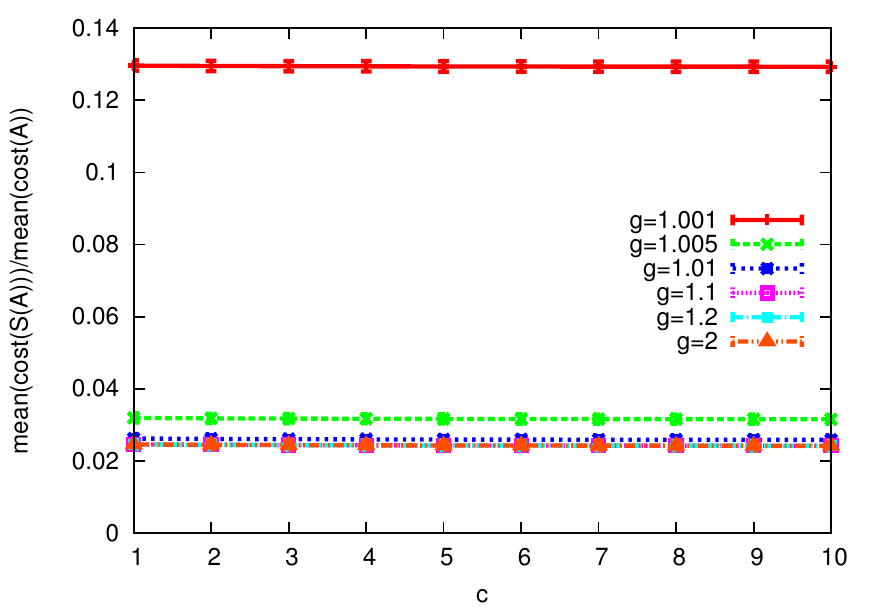}}
\hfill
\subfloat[Mean Imbalance]{\label{fig:oriAvF}\includegraphics[width=0.28\linewidth]{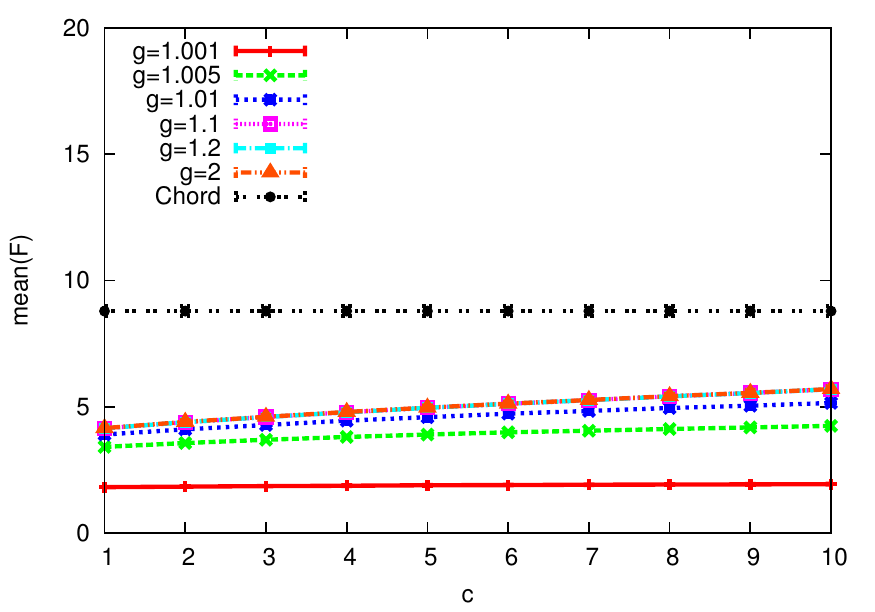}}
\hfill
\subfloat[Max Imbalance]{\label{fig:oriMaxF}\includegraphics[width=0.28\linewidth]{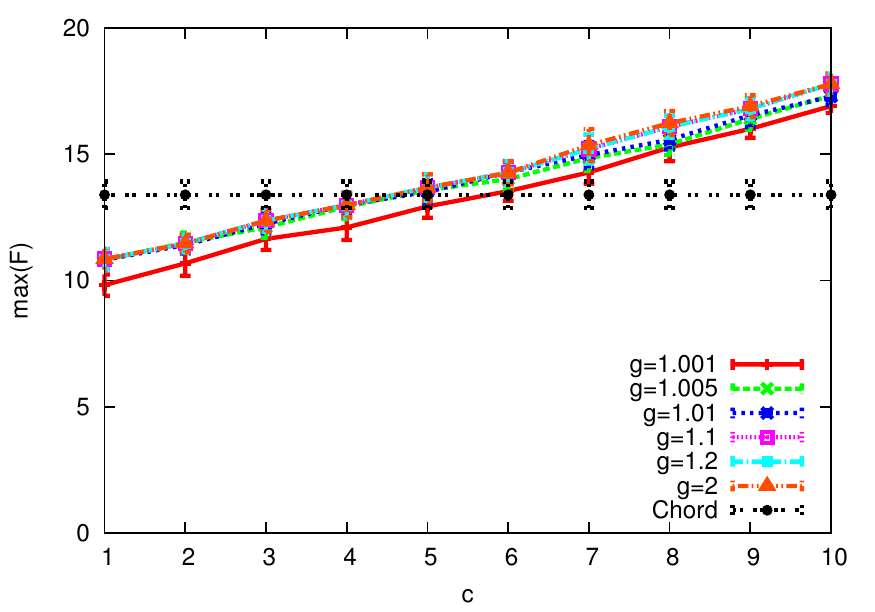}}
\caption{Stabilization overhead (normalized by overhead of complete re-embedding) and corresponding imbalance of the content addressing of Algorithm \ref{algo:stabembed} for various values of $c$ and $g$}
\vspace{-1.5em}
\label{fig:ori}
\end{figure*} 

\longonly{
\begin{lemma}
\label{thm:desc}
The expected number of descendants of a random node is $\E(Y)=\calO(D)$.
\end{lemma}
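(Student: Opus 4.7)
The plan is to prove this by a straightforward double-counting argument, interpreting "number of descendants" as the size of the subtree rooted at the node (whether or not we include the node itself only changes a lower-order constant and does not affect the $\calO(D)$ bound). I will first set up the summation, then swap the order of summation to reduce it to a sum of depths, which is bounded using the hypothesis on the tree depth.

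More concretely, for each node $v \in V$ let $Y_v = |V_v|$ denote the number of nodes in the subtree rooted at $v$. Since we pick a node uniformly at random, the expectation is
\begin{align*}
\E(Y) = \frac{1}{n} \sum_{v \in V} Y_v = \frac{1}{n} \sum_{v \in V} |\{u \in V : v \text{ is an ancestor of } u\}|.
\end{align*}
The key step is to reinterpret this double sum by summing over $u$ first rather than over $v$. For any fixed $u$, the number of pairs $(v,u)$ for which $v$ is an ancestor of $u$ equals the number of ancestors of $u$, which is exactly $level(u)+1$ (counting $u$ itself as its own ancestor, in line with $u \in V_u$). Hence
\begin{align*}
\E(Y) = \frac{1}{n}\sum_{u \in V}\bigl(level(u)+1\bigr).
\end{align*}

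Using the hypothesis that the spanning tree has depth at most $D$, we have $level(u) \leq D$ for every $u \in V$, so each term in the sum is at most $D+1$. Therefore $\E(Y) \leq D+1 = \calO(D)$, which is exactly the stated bound.

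I expect no substantive obstacle; the result is a pure counting fact about rooted trees and does not depend on the embedding or the stabilization procedure. The only care needed is in the bookkeeping convention for what counts as a "descendant" (whether $v \in V_v$ is included), but since this shifts the result by an additive $1$ only, it is absorbed into the $\calO(D)$ bound. This lemma will then feed into the later stabilization-cost analysis, where $\E(Y)$ controls the expected cost of re-embedding a subtree that is touched by a random topology change.
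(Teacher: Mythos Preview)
Your proof is correct and follows essentially the same double-counting argument as the paper: both count descendant--ancestor pairs, swap the order of summation to reduce to a sum of levels, and bound each level by $D$. The only cosmetic difference is your inclusion of the node itself (giving $level(u)+1$ instead of $level(u)$), which you already note is absorbed into the $\calO(D)$.
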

\begin{proof}
The idea of the proof is to make use of the fact that the average number of descendants is equal to the average number of ancestors. The latter corresponds to the average level of a node and is hence bound by the depth of tree. 
Formally, let $Z=\{(u,v): u \textnormal{ is descendant of }v\}$ denote the set of all descendant-ancestor relations. The expected number of descendants
is hence given by $|Z|/n$.
We now determine an upper bound on $|Z|$.
For this purpose, let $L_u$ denote the level of a node $u$.
A node on level $l$ of the spanning tree is a descendant of $l$ nodes, so that the total number of descendant-ancestor relations
corresponds to the sum of all levels. Hence
\begin{align*}
|Z| = \sum_{u \in V} L_u \leq n |D|.
\end{align*}
Division by $n$ shows the claim. 
\end{proof}
}

\begin{prop}
\label{thm:join}
The communication complexity of Algorithm \ref{algo:stabembed} for a node join is 
$\E\left(cost_{join}(\stab(\embed))\right)=\calO\left(D\right)$
\end{prop}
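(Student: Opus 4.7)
The plan is to split the cost of stabilizing a join into two parts: the $\calO(D)$ cost of propagating the updated subtree size up the path from the parent $u$ of the newly inserted node to the root (one message per level, by the tree-depth bound), and the cost of the single call to $\embed$ triggered at some ancestor $u^{\star}$ of $u$, which by item~2 of Definition~\ref{def:dymEm} is $\calO(|V_{u^{\star}}|)$. If I can show that $u^{\star}=u$ and that $\E(|V_u|)=\calO(D)$, the proposition follows.

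The key step is to show that on a pure join the re-embedding request is never relayed past $u$. For this I would maintain by induction over stabilization steps the invariant that, for every node $w$, $cont(V_w)/|V_w| \leq cont(V_{w_0})/|V_{w_0}| + \delta$, where $w_0$ is the lowest ancestor of $w$ (inclusive) at which $\embed$ was most recently executed. The $(1,\delta)$-balance property expressed in Eq.~\ref{eq:dymCA} gives the invariant immediately after a fresh re-embedding; a subsequent join leaves $cont(V_w)$ unchanged for every ancestor $w$ while strictly increasing $|V_w|$ by one, so the invariant is preserved and in fact the ratio only decreases. Combining the invariant with the inequality $n_{est} g\, cont(V_{w_0})/|V_{w_0}| \leq g(1+c+level(w_0))$ that held at the most recent re-embedding above $u$, together with the per-level slack $g$ built into $f(u)=g(1+c+level(u))$ from Eq.~\ref{eq:fu}, I would verify that Eq.~\ref{eq:criterion} holds at $u$ after the join, so that line~\ref{l:em1} of Algorithm~\ref{algo:stabembed} executes $\embed(u)$ locally rather than relaying.

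Once $u^{\star}=u$ is established, the last step is a direct appeal to Lemma~\ref{thm:desc}. Under the natural assumption that the new node attaches to a uniformly random existing node, $u$ is a uniformly random node of $V$ and $|V_u|$ is precisely the number of descendants of a random node, whose expectation is $\calO(D)$. Adding the $\calO(D)$ propagation cost and the $\calO(\E(|V_u|))=\calO(D)$ re-embedding cost yields $\E(cost_{join}(\stab(\embed)))=\calO(D)$.

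The main obstacle is the invariant and the level-slack accounting. The delicate case is when the most recent ancestor re-embedding occurred at $u$ itself, where there is no extra level of slack available to absorb the $\delta$ term; there I would instead exploit the fact that a pure join only decreases $cont(V_u)/(|V_u|+1)$ relative to the pre-join ratio, so the criterion at $u$ remains valid without consuming any new slack. When the last re-embedding occurred strictly above $u$, the $g(level(u)-level(w_0))$ slack dominates the accumulated $n_{est} g\, \delta$ error under the standing smallness assumption on $\delta$ articulated after Definition~\ref{def:fbalance}. Everything else is routine given Definition~\ref{def:dymEm} and Lemma~\ref{thm:desc}.
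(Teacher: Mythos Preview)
Your decomposition and the core argument that the parent $u$ always satisfies Eq.~\ref{eq:criterion} after a join are correct and match the paper's approach. The paper's justification is in fact a one-liner: a join increases $|V'_u|$ while leaving $cont(V_u)$ unchanged, so the left side of Eq.~\ref{eq:criterion} can only decrease and hence stays below $f(u)$. Your invariant machinery and level-slack accounting reach the same conclusion but are more elaborate than needed; no tracking of the lowest re-embedding ancestor is required for joins.

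There is, however, a genuine omission. After $u$ re-embeds locally and sets the flag $b$ to true, the size update still propagates to the root (Line~\ref{l:forward} of Algorithm~\ref{algo:stabembed}). If the new network size now exceeds $n_{est}g$, the root executes $\embed$ on the \emph{entire} tree and broadcasts a fresh estimate (Lines~\ref{l:rootS}--\ref{l:rootE}). This is a second call to $\embed$, at cost $\Theta(n)$, which is not covered by your ``single call at $u^\star=u$'' accounting; for the particular join that tips $n$ past $n_{est}g$, your argument gives $\Theta(n)$ rather than $\calO(D)$. The paper handles this as a separate summand $X_3$ and closes it by amortization: the event $n>n_{est}g$ can occur only after at least $n_{est}(g-1)+1$ joins since the last estimate update, hence with probability $\calO(1/n)$ per join, contributing $\E(X_3)=\calO(1)$ in expectation. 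You need this step (or an equivalent amortized argument) for the bound to go through.
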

\shortonly{The idea of the proof is to write the communication complexity as a sum of three parts: i) the complexity of local re-embeddings, ii) the complexity of the network size estimation updates, and iii) the complexity of potential re-embeddings after reaching the root. 
We show that the first is bound by the expected number of descendants of a node, scaling with $\calO(D)$. The second corresponds to forwarding a message to the root along the tree, resulting again in the bound $\calO(D)$. In the case of iii), we show that the probability of starting an embedding from the root is $\calO\left(\frac{1}{n}\right)$, resulting in a constant expected complexity due to the linear overhead of the embedding.}

\longonly{
\begin{proof}
We write the communication complexity as a sum of three phases.
First, $X_1$ denotes the complexity of local re-embeddings, corresponding to Line \ref{l:em1} in Algorithm \ref{algo:stabembed}.
Second, $X_2$ denotes the complexity of propagating status updates to the root (Lines \ref{l:forward}, \ref{l:em2} and \ref{l:relay}).
Third, $X_3$ denotes the communication complexity after the updates have reached the root (Lines \ref{l:rootS}-\ref{l:rootE}).
So, $\E\left(cost_{join}(\stab(\embed))\right)=\E(X_1)+\E(X_2)+\E(X_3)$.

We start by considering $X_1$. 
Note that the parent node $u$ of the newly joined node $v$ calls $\embed$ rather than relaying the request
as the quantity on the left of Eq. \ref{eq:criterion} is reduced and thus remains smaller than $f(u)$. 
Thus, $X_1$ corresponds to the complexity of applying $\embed$ to a subtree consisting of $u$, $v$, and
all of $Y$ descendants of $u$. By the third condition in Definition \ref{def:dymEm} and Lemma \ref{thm:desc},
$\E(X_1)=\calO(2+\E(Y))=\calO(D)$ follows.

The number of propagated updates $X_2$ is bound by the longest path from a node to the root, hence
$\E(X_2)=\calO(D)$.

In order to determine $X_3$, let $E$ denote the event that the new network size $n$ after the join exceeds $n_{est}g$.  
We have $\E(X_3)=\E(X_3|E)P(E)$, because the root only sends additional messages if a new estimate needs to be broadcast.
The complexity of re-embedding and broadcasting $n_{est}$ is $\E(X_3|E)=\calO(n)$.
However,  the event $E$ implies that $n = n_{est}g+1$, hence it only occurs after at least $n_{est}(g-1)+1$ nodes joined.
Hence $E$ occurs only for a fraction $P(E)=\calO\left(\frac{1}{n}\right)$ of the joins.
So, $\E(X_3)=\calO(1)$ and indeed
\longonly{
\begin{align*}
\E\left(cost_{join}(\stab(\embed))\right)=\E(X_1)+\E(X_2)+\E(X_3)=\calO(D).
\end{align*}
}
\shortonly{$\E\left(cost_{join}(\stab(\embed))\right)=\E(X_1)+\E(X_2)+\E(X_3)=\calO(D).$}
\end{proof}
}

\begin{prop}
\label{thm:depart}
The communication complexity of Algorithm \ref{algo:stabembed} for a node departure is 
$\E\left(cost_{depart}(\stab(\embed))\right)=\calO\left(D^3\E(S)\right).$
\end{prop}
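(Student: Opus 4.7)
The plan is to mirror the decomposition used in the proof of Proposition \ref{thm:join} by writing $\E(cost_{depart}(\stab(\embed))) = \E(X_1) + \E(X_2) + \E(X_3)$, where $X_1$ is the cost of a local re-embedding via $\embed$, $X_2$ is the cost of forwarding status updates along the path to the root, and $X_3$ captures any re-embedding and broadcast initiated by the root. First I would dispose of the two easy terms: $\E(X_2) = \calO(D)$ since status updates traverse at most the tree depth, and $\E(X_3) = \calO(1)$ by the same argument as in the join proof, because the event that $|V_r|$ drops below $n_{est}/g$ occurs only once in $\Omega(n)$ departures while the conditional root-level cost is $\calO(n)$.

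The core of the proof is bounding $\E(X_1)$. In contrast to the join case, where the parent of the new node always satisfies Eq. \ref{eq:criterion} because an arrival only decreases the ratio $cont(V_u)/|V_u|$, a departure can force the re-embedding request to be relayed upward through several ancestors. Let $L^*$ denote the level of the first ancestor at which Eq. \ref{eq:criterion} is satisfied; by the dynamic property of $\embed$, the corresponding local re-embedding has cost $\calO(|V_{L^*}|)$. My plan for bounding $\E(|V_{L^*}|)$ has two components: (i) an amortization argument showing that, once a node at level $l$ re-embeds, Eq. \ref{eq:criterion} has enough slack at that level that $\Omega(|V_{u_l}|/D)$ further departures are required before the criterion can fail again at level $l$, and (ii) a level-by-level bound on the expected size $|V_l|$ using Lemma \ref{thm:desc} together with the fact that a re-embedding subtree at level $l$ encompasses all sibling subtrees along the descending path, which introduces the factor $\E(S)$. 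Combining the amortized $\calO(D)$ contribution per level with the $\calO(D \cdot \E(S))$ bound on the subtree at level $L^*$, summed over the $\calO(D)$ possible levels, yields the claimed $\calO(D^3 \E(S))$.

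The main obstacle will be making the amortization rigorous. Departures can, in adversarial patterns, chain far up the tree, and controlling the expected propagation depth requires a careful coupling between the level-dependent slack $f(u) = g(1 + c + level(u))$ and the rate at which $cont(V_u)/|V_u|$ grows as $|V_u|$ shrinks. I would need to verify that this slack is indeed sufficient so that expensive re-embeddings near the root occur on at most a $1/\mathrm{poly}(n)$ fraction of departures, and that these rare events are absorbed by the amortized analysis. A subtler point is that, unlike joins, the departure analysis is not symmetric across siblings: a sequence of departures concentrated on one branch can temporarily inflate $|V_{L^*}|$ disproportionately, and bounding the resulting expectation in terms of $\E(S)$ alone---rather than a higher moment of sibling subtree sizes---requires that the departure process be treated as essentially independent of the current tree shape, which is where I expect the technical subtleties to concentrate.
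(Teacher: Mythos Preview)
Your decomposition omits a fourth term that the paper treats explicitly: when a node $v$ departs, all of its descendants are orphaned and must re-attach elsewhere in the spanning tree, each such re-attachment behaving like a fresh join. The paper calls this $X_4$ and bounds it by combining Lemma \ref{thm:desc} (an expected $\calO(D)$ descendants) with Proposition \ref{thm:join} ($\calO(D)$ per join) to get $\E(X_4)=\calO(D^2)$. This term is dominated by $X_1$, so the final bound survives, but without it the analysis of a departure is incomplete: your $X_1$ only covers the re-embedding triggered by the ancestors of the departed node, not the cascade of rejoins below it.

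Your attack on $X_1$ also differs structurally from the paper's. You try to bound $\E(|V_{L^*}|)$ directly via an amortization over departures plus a size bound on the subtree at level $L^*$, but the way you combine ``amortized $\calO(D)$ per level'' with a ``$\calO(D\cdot\E(S))$ subtree bound'' and then ``sum over $\calO(D)$ levels'' does not parse into a clean product; it double-counts one factor of $D$ and leaves the role of $\E(S)$ unclear. The paper instead flips perspective: rather than asking how large the re-embedded subtree is for a given departure, it asks, for an arbitrary node $w$, what is the probability $p$ that $w$ participates in any re-embedding, and then sets $\E(X_1)=\calO(np)$. This probability splits into (a) a descendant of $w$ departed, $\calO(D/n)$ by Lemma \ref{thm:desc}, and (b) some ancestor of $w$ re-embeds. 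For (b), the paper first shows---using essentially your amortization ingredient (i)---that any fixed node $u$ relays a request to its parent only once per $\Omega(n/D^2)$ departures, so the per-departure probability that $u$ relays is $\calO(D^2/n)$. The $\E(S)$ factor then enters cleanly: an ancestor of $w$ re-embeds if \emph{any sibling} of any ancestor of $w$ relays, and there are at most $D$ ancestors each with $\E(S)$ siblings in expectation, giving $p=\calO(D^3\E(S)/n)$ and hence $\E(X_1)=\calO(D^3\E(S))$. Your ingredients (i) and (ii) are close to the right ones, but the per-node-participation viewpoint is what makes the bookkeeping of the $D$ and $\E(S)$ factors transparent.
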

\shortonly{As above, the idea of the proof is to divide the overall communication complexity in parts. In addition to the aspects above, we have to consider the rejoins of the departed node's descendants, which we can bound by Proposition \ref{thm:join}. 
The main difficulty lies in determining the complexity of the local re-embeddings. 
We derive the probability that a node has to relay a request for re-embedding in relation to its number of descendants. 
Based on the results, we can determine a global upper bound on the likelihood that a node has to participate in a re-embedding. 
Now, the expected number of nodes that have to participate corresponds to the complexity of the local re-embeds.}

\longonly{
\begin{proof}
Analogously to the proof of Proposition \ref{thm:join}, we derive the desired bound as the sum of four
phases $X_1$, $X_2$, $X_3$, and $X_4$.
The decisive quantity is $\E(X_1)$, which corresponds to local re-embeddings by an ancestor of the departed node as a direct reaction to the departure rather than to a re-join of a descendant of the departed node. 
As in the proof of Lemma \ref{thm:join}, $X_2$ and $X_3$ denote the complexity of propagating updates to and from the root with $\E(X_2)=\calO(D)$ and $\E(X_3)=\calO(1)$.
$X_4$ denotes the complexity resulting from the re-joins of the descendants of the departed node. 
By Lemma \ref{thm:desc} and Lemma \ref{thm:join}, these correspond to an expected number of $\calO(D)$ joins at an expected communication complexity of $\calO(D)$ each, hence $\E(X_4)=\calO(D^2)$. 

The main difficulty lies in deriving $\E(X_1)$. We obtain a global upper bound on the probability $p$ that a node $v$ has to participate in the local re-embedding caused by the departure. Then, we have $\E(X_1)=\calO(n p)$. 
A node $v$ has to participate if one of $v$'s descendants departs or one of $v$'s ancestors re-embeds. By Lemma \ref{thm:desc}, the probability $P(E_1)$ that a descendant departs is $P(E_1)=\calO(D/n)$.
For the second event $E_2$ of being affected by an ancestor's re-embedding, we first consider the frequency of relayed embedding requests.
Let $Z$ denote the number of topology changes until a node $u$ has to relay a request to its parent after the last re-embedding of $u$'s subtree initiated by any ancestor of $u$.
In the following, we show that $\E(Z)=\Omega\left(\frac{n}{D^2}\right)$ regardless of $u$'s position in the tree. Thus, the probability that a node departure results in a re-embedding request from $u$ to its parent is $P(E_3)=\calO\left(\frac{D^2}{n}\right)$ for all nodes $u$.
Now, a node $v$'s ancestor re-embeds if $v$ or one of $v$'s ancestors or their siblings request a re-embed.
$v$ has at most $D$ ancestors with an expected number of siblings of $\E(S)$, hence by a union bound
$P(E_2)=P(E_3)DK=\frac{D^3(\E(S)-1)}{n}$.
Thus, 
\begin{align*}
\E(X_1)=\calO\left(n(P(E_1)+P(E_2))\right)=\calO\left(D^3\E(S)\right).
\end{align*}

It remains to determine $\E(Z)$. 
We start by considering the number of descendants $U$ of $u$ that either depart or have to re-join due to a  departure. 
Then, we derive a lower bound $\theta$ on the number of nodes that can depart before $u$ relays a re-embedding request. It follows $\E(Z)=\theta /\E(U)$.
The network size does not change considerably between a re-embedding from an ancestor and a request for re-embedding, as otherwise the root would initialize a re-embedding. 

The probability that a departing node is one of $u$'s initial descendants is $\calO\left(\frac{|V_u|}{n}\right)$. By Lemma \ref{thm:desc}, a departure affects on average $\calO(D)$ nodes, namely all descendants of the departing node. By conditioning on the fact that the departure takes place in subtree of potentially less than $n$ nodes, $\calO(D)$ remains a valid upper bound. Hence  $\E(U)= \calO\left(\frac{D |V_u|}{n}\right)$. 

After a re-embedding initiated by $p(u)$ or another ancestor, we know that the content addressable storage is $(f(p(u)),\delta)$-balanced with regard to $V_u$.
Hence, by Eq. \ref{eq:fu} and $level(p(u))=level(u)-1$, $con(V_u)=\sum_{v \in V_u}\mu(\closest(v))\leq |V_u|(\frac{g(1+c+level(u)-1)}{n_{est}g} + \delta)$ or 
\begin{align}
\label{eq:relVu}
|V_u|\geq \frac{cont(V_u)n_{est}g}{(1-\frac{g}{f(u)})f(u)+n_{est}g\delta}
\end{align}
By Eq. \ref{eq:criterion}, $u$ has to request a re-embedding if the subtree size $|V'_u|$ has become so small that
\begin{align}
\label{eq:relVu2}
|V'_u|\leq \frac{cont(V_u)n_{est}g}{f(u)}
\end{align}
As $\delta$ is negligible, $f(u)=\calO(D)$, Eq. \ref{eq:relVu} and Eq. \ref{eq:relVu2} result in $|V'_u|=\calO\left(1-\frac{1}{D}|V_u|\right)$.
In other word, at least $\theta=\Omega\left(\frac{1}{D}|V_u|\right)$ descendants of $u$ have to depart or re-join such that $u$ has to re-embed. Hence,
\begin{align*}
\E(Z)\geq\frac{\theta}{\E(A)}=\Omega\left(\frac{|V_u|}{D}\frac{n}{D|V_u|} \right)= \Omega\left(\frac{n}{D^2} \right)
\end{align*} 
and thus indeed $\E(X_1)=\calO(D^3\E(S))$ and $\E\left(cost_{depart}(\stab(\embed))\right)=\calO\left(D^3\E(S)\right)$. 
\end{proof}
}
\longonly{
We now obtain the general result for a tree of logarithmic depth, considering both the original algorithm $\stab(\embed)$ and the version $\stab(\embed)_{virt}$ relying on virtual binary trees. For the latter, the depth is bound by $\calO(D \log n)$ rather than $D$ but the expected number of siblings is $\E(S)\leq 1$. 
\begin{corollary}
Let $(G,\id,C, \add)$ be a content-addressable storage with a tree-based greedy embedding $\id$. The depth of the spanning tree
is at most $\calO(\log n)$. With $S$ denoting the number of siblings of a node, the algorithm $\stab(\embed)$ maintains a
$(\calO(\log n),\delta)$-balanced content-addressable storage at communication complexity
\begin{align*}
\E(cost(\stab(\embed)))=\calO\left(\log^3 n \E(S)\right).
\end{align*}
Using virtual binary trees, the algorithm  $\stab_{virt}(\embed)$ maintains a
$(\calO(\log^2 n),\delta)$-balanced content-addressable storage at communication complexity
\begin{align*}
\E(cost(\stab_{virt}(\embed)))=\calO\left(\log^6 n \right).
\end{align*}
\end{corollary}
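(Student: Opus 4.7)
The plan is to derive the corollary by combining Propositions~\ref{thm:join} and~\ref{thm:depart} with the assumption $D = \calO(\log n)$, and then adapting the same machinery to the virtual binary tree variant.

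First, I would handle the original algorithm $\stab(\embed)$. The balance claim follows directly from the construction in Equation~\ref{eq:fu}: since every node lies at level $level(u) \leq D$ in the spanning tree, the local balance factor satisfies $f(u) \leq g(1+c+D) = \calO(D) = \calO(\log n)$, so the stabilization algorithm maintains a $(\calO(\log n),\delta)$-balanced content addressable storage. For the communication complexity, I would note that any topology change is either a node join or a node departure, so $\E(cost(\stab(\embed)))$ is upper-bounded by the sum of the two quantities from Propositions~\ref{thm:join} and~\ref{thm:depart}. Substituting $D = \calO(\log n)$ yields $\calO(\log n) + \calO(\log^3 n \,\E(S)) = \calO(\log^3 n \,\E(S))$, since $\E(S) \geq 1$ whenever the tree is non-trivial.

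For the variant $\stab_{virt}(\embed)$ with virtual binary trees, the key observation is that each internal node of the virtual binary tree has at most two children, so the effective number of siblings is bounded by a constant; in particular $\E(S) \leq 1$. The price paid for this structural simplification is an increase in depth: a physical node with $k$ children is represented as a binary subtree of depth $\calO(\log k) = \calO(\log n)$, so the effective depth becomes $D' = \calO(D \log n) = \calO(\log^2 n)$. Plugging $D'$ into the balance bound from the previous paragraph gives a $(\calO(\log^2 n),\delta)$-balanced storage, and plugging $D'$ and $\E(S) = \calO(1)$ into Proposition~\ref{thm:depart} yields communication complexity $\calO((\log^2 n)^3) = \calO(\log^6 n)$, which dominates the join cost $\calO(\log^2 n)$.

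The only nontrivial step is justifying that the analyses of Propositions~\ref{thm:join} and~\ref{thm:depart} remain valid when levels are interpreted in the virtual binary tree rather than the physical tree. This is guaranteed by the remark following Algorithm~\ref{algo:stabembed}, which explicitly states that the level used in Line~\ref{l:check} refers to the virtual level; thus the probability bound $P(E_3) = \calO(D'^{2}/n)$ from the proof of Proposition~\ref{thm:depart} carries over unchanged, with $D$ replaced by $D'$. I expect this bookkeeping---tracking which quantities scale with the physical depth $D$ versus the virtual depth $D'$---to be the main potential pitfall, since the number of nodes $n$ (and hence the subtree-size estimates) is still measured on the physical tree while the level-dependent balance factor $f(u)$ uses virtual levels.
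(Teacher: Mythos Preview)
Your proposal is correct and matches the paper's approach: the paper treats this corollary as an immediate consequence of Propositions~\ref{thm:join} and~\ref{thm:depart} under $D=\calO(\log n)$, together with the observation (stated just before the corollary) that virtual binary trees give $\E(S)\leq 1$ at the price of depth $\calO(D\log n)$. Your write-up is actually more detailed than the paper's, which offers no formal proof beyond that one-sentence remark; your discussion of the virtual-versus-physical level bookkeeping is a welcome addition, though the relevant justification in the paper sits in Section~\ref{sec:vary} rather than directly after Algorithm~\ref{algo:stabembed}.
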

Thus, if the trees are reasonable regular, i.e., $\E(S)$ is bound by a constant or a (poly-)logarithmic factor, the original version $\stab(\embed)$ exhibits both a better balance and lower computation complexity. However, if $\E(S)$ is high, i.e., there are few nodes with many children while the majority of nodes have few children, the computation complexity increases. Then, the virtual binary tree version can offer a lower computation complexity at the price of a higher balance factor.    

In the following, we simulate the actual overhead of the two algorithms and compare the results to these bounds. 
}
\shortonly{Hence, for a logarithmic tree depth, we have a stabilization complexity of $\calO(\log^3 n ~\E(S))$, meaning that for a polylog number of siblings we indeed have $\calO(polylog(n))$. If we use virtual binary trees, we have $\E(S)=2$ in the virtual tree. However, the tree depth $D$ increases by up to a factor $\calO(\log n)$ (the maximal depth of a virtual binary tree at any of the original $D$ levels), resulting in a stabilization complexity of $\calO(\log^6 n)$. The corresponding bounds on $f$ are $\calO(\log n)$ for the original version and $\calO(\log^2 n)$ for virtual trees.}

}
\section{Simulations}
\label{sec:eval}

We substantiate the previous asymptotic bounds by a simulation study considering the case of F2F overlays.
Our goal is to provide concrete bounds on the stabilization complexity and the balance of the content addressing of Algorithm \ref{algo:stabembed} for various values of the tree depth offset $c$ and network size estimation accuracy $g$.  
Furthermore, we evaluate the impact of the simple join and virtual binary tree variant described in Section \ref{sec:vary}. 
In the following, we describe our simulation model, set-up, expectations and results.

\emph{Simulation Model and Set-up:} Our simulation builds on GTNA \cite{schiller2013gtna}, a framework for graph analysis. 
Aside from the parameters $c$ and $g$, the performance of Algorithm \ref{algo:stabembed} depends on the graph $G$ and the churn pattern, i.e., the node join and departure sequence.
In our simulation model, we characterize the latter by the session and intersession length distributions $L_{S}$ and $L_{I}$. 
Furthermore, we use the spanning tree construction by Perlman \cite{Perlman85}, which assigns each node a random numerical identifier and then constructs a spanning tree of minimal depth such that the root corresponds to the node with the highest identifier.
 
During the set-up phase, each node chooses its random identifier for the spanning tree construction, which remains constant during the simulation. 
Initially, each node is online with probability $\frac{\E(L_{S})}{\E(L_{S}+L_{I})}$. If a node is online at start-up, we assume that it is at a random point of its current session. In other words, we choose the time of an online node's departure by selecting a session length $l$ according to $L_S$ and multiplying $l$ with a uniformly chosen random number in $[0,1)$. Analogously, we select the time until an offline node joins. 
Then, we execute the spanning tree construction on the subgraph induced by all initially online nodes. Subsequently, we execute Algorithm \ref{algo:embed} on the same subgraph to obtain the initial embedding.
If the graph is partitioned into multiple components, we execute the two algorithms for each component individually. 

In each step $i$ of the algorithm, we add or remove a node according to the previously selected sessions and intersession times. 
We choose the time of this node's next join or departure by selecting an interval $l$ according to $L_{I}$ or $L_{S}$, respectively, and add $l$ to the currently elapsed time. 
Afterwards, we re-establish the spanning tree, potentially merging trees if previously disjoined components are connected or constructing new trees if new partitions are created. 
Last, we execute Algorithm \ref{algo:stabembed}, starting from either a newly joined node or the parent and children of a departed node.

\begin{figure*}[ht]
\centering
\subfloat[Stabilization]{\label{fig:varStab}\includegraphics[width=0.35\linewidth]{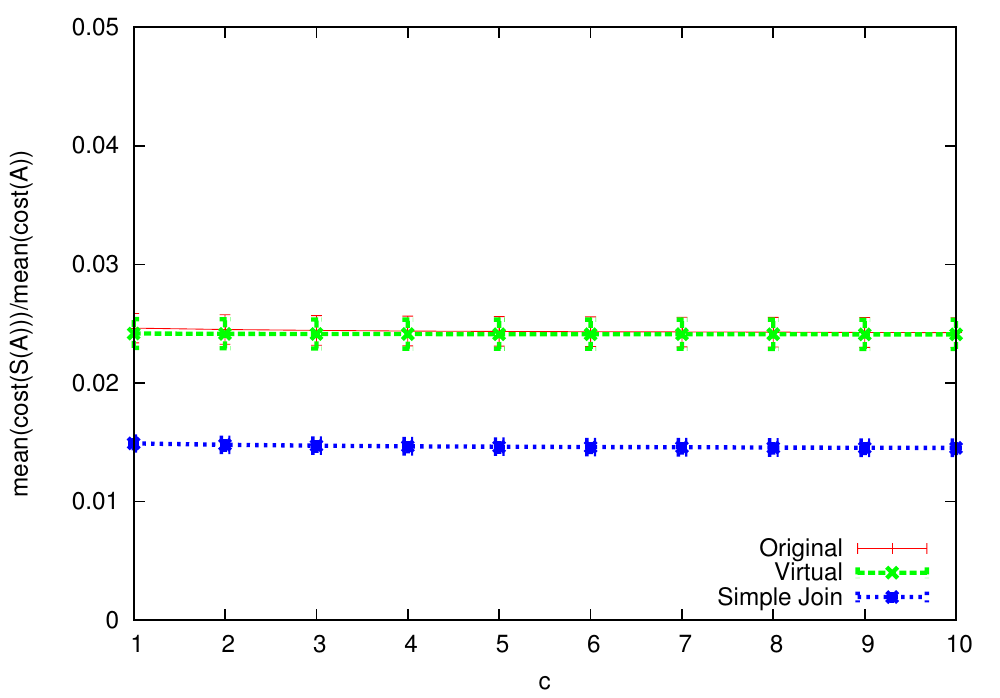}}
\subfloat[Imbalance]{\label{fig:varF}\includegraphics[width=0.35\linewidth]{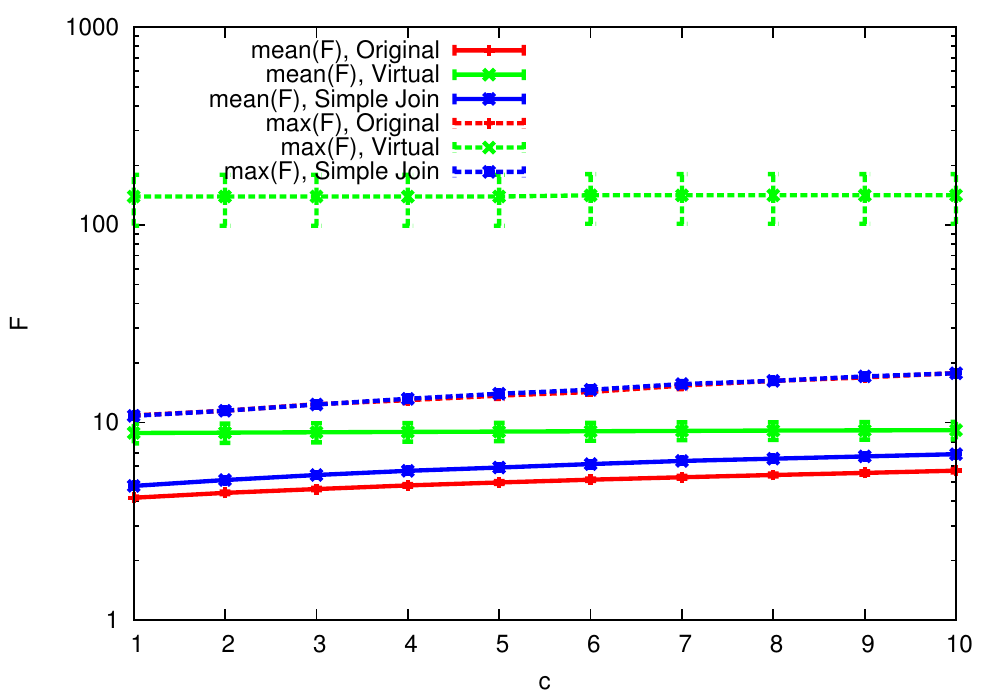}}
\caption{Stabilization overhead and corresponding imbalance of the content addressing of Algorithm \ref{algo:stabembed} in this original form, the virtual binary tree variant, and the simple join variant, $g=2$}
\label{fig:variants}
\vspace{-2em}
\end{figure*}

During the simulation, we measure the number of all messages $cost_i(\stab(\embed))$ required for stabilization at step $i$.
Let $mean(cost(\stab(\embed)))$ denote the mean of  $cost_i(\stab(\embed))$ over all $i$.
Furthermore, we compare the cost of our algorithm to that of re-embedding at each topology change, i.e., we compute $\frac{mean(cost(\stab(\embed)))}{mean(\embed)}$ with $mean(\embed)$ being the mean number of messages to i) inform the root of the joined or departed node's tree of the change, and ii) executing Algorithm \ref{algo:embed} on the complete tree.
\longonly{The quantity $\frac{mean(cost(\stab(\embed)))}{mean(\embed)}$ thus indicates how much our changes reduce the stabilization complexity.}
In order to characterize the balance of the content addressing, we consider the fraction of addresses $\mu_i(u)$ assigned to each online node $u$ in step $i$ in relation to the number of nodes $n_i(u)$ in $u$'s component, i.e., we derive $F_{u,i}=\mu_i(u)\cdot n_i(u)$. 
For each step $i$, we derive the maximum $F_i = \max_{u: online}F_{u,i}$ and compare it to the upper bound on the maximal permitted imbalance defined in Eq. \ref{eq:fu}. We then consider the mean and maximal imbalance $mean(F)$  and $max(F)$ over all steps $i$.

Our sample set-up considers the case of F2F overlays, which are route-restricted and limit direct communication to devices of users with a mutual trust relationship. 
In this study, we use the friendship graph of a university online social network (SPI) of $9,222$ students with an average of $10.58$ connections to model trust relations \cite{paul2015students}. 
\longonly{SPI represents a friendship network and is thus in the absence of actual F2F overlay topologies a suitable model for such an overlay. In contrast to subgraphs of Facebook or other large-scale social networks, the locality of the network indicates that people sharing a link indeed share a real-world friendship or at least acquaintance.}
In order to judge the impact of the graph topology on the algorithms, we also generated one synthetic graph according to the model of Barabasi-Albert (BA) and another graph according to the model of Erdos-Renyi (ER-1), both with the same number of nodes and the same average degree of $10.58$ as SPI.
Furthermore, we generated another Erdos-Renyi graph with 9,222 nodes but a higher average degree of $922.2$ (ER-2), to shed light on the impact of the density of the network. 
Our churn patterns follow the empirical session and intersession length measured in Freenet, an anonymous content sharing network with a F2F mode \cite{roos2014measuring}. Based on these churn patterns, the number of concurrently online nodes usually varies between $3,700$ and $4,000$. 
As for the parameters of Algorithm \ref{algo:stabembed}, we varied $c$ between $1$ and $10$ and choose $g \in \{1.001, 1.005, 1.01, 1.1, 1.2, 2\}$.
All parameter combinations were considered for the original form of Algorithm \ref{algo:stabembed} as well as for the simple join and the virtual binary tree variant.
We averaged our result over 20 runs and present them with 95\% confidence intervals. Each run consists of 100,000 consecutive node joins or departures.
Note that we used the same 20 joins and departure sequences for each set of parameters in order to facilitate comparisons.
For comparison, we also measure $mean(F_i)$ and $max(F_i)$ for a Chord overlay \cite{stoica2001chord} of the same size using these join and departure sequences.

\emph{Expectations:} Our expectations with regard to the parameters $c$ and $g$ on the stabilization overhead and the balance of the content addressing are governed by Eq. \ref{eq:fu} and Propositions \ref{thm:join} and \ref{thm:depart}.
Eq. \ref{eq:fu} indicates that the upper bound on the maximal imbalance increases with both $g$ and $c$.
However, considering Line \ref{l:check} of Algorithm \ref{algo:stabembed}, we see that $g$ does not affect the actual decision of re-embedding. Rather, it only affects the certainty of nodes in the current network size estimation and thus has at most an indirect effect on the actual balance $F_i$.
In contrast, $c$ affects the decision in Line \ref{l:check} and allows nodes to accept a larger imbalance. Hence, we expect an increase in $mean(F)$ and $max(F)$ with an increased $c$.

With regard to the stabilization complexity, the asymptotic bounds in Propositions \ref{thm:join} and \ref{thm:depart} are independent of both $g$ and $c$. Indeed, $g$ only indicates the frequency of re-embeddings due to an inaccurate network size estimation. For larger $g$, such re-embeddings should be rare, so that we do not expect a considerable impact of $g$ on either the actual imbalance or the stabilization overhead. 
However, for very low $g=1.001$ or $g=1.005$, re-embeddings only require a change of less than 1\% in size, so that re-embeddings actually impact the overall stabilization overhead. Thus, we assume that the stabilization overhead is higher for these $g$ whereas $mean(F)$ and $max(F)$ decrease due to the frequent re-embeddings, which re-establish a perfectly balanced address assignment.

The goal of using a simplified join mechanism and virtual binary trees is to reduce the stabilization overhead. While Proposition \ref{thm:depart} indicates a reduced stabilization overhead when using virtual binary trees, the increased depth of the binary trees entails an increased upper bound on the permitted imbalance as by Eq. \ref{eq:fu}.
Thus, it is likely that the actual values $mean(F)$ and $max(F)$ are higher than for Algorithm \ref{algo:stabembed}.
Similarly, our simplified join mechanism reduces the stabilization overhead by not requiring re-embeddings of the complete subtree rooted at the parent. \longonly{However, such re-embeddings after joins present the opportunity to improve the balance  if previous departures have already changed the content assignment within the subtree considerably but not sufficiently for a re-embedding.
Thus, the price for the reduced stabilization complexity is likely to be an increased actual imbalance though the bound on the permitted imbalance remains unaffected.}

We choose the four types of graphs (SPI, BA, ER-1, ER-2) in order to ascertain that our expectations with regard to the impact of the node degree hold. 
For instance, we expect that the low-degree random graph ER-1 results in trees with a low number of children per node and hence a high depth.
By Eq. \ref{eq:fu}, the high depth should correlate with a high imbalance. 
Due to the low number of children, we expect a comparable low stabilization overhead for ER-1.
For analogous reasons, we expect the opposite results, namely a high stabilization overhead and a well-balanced content addressing, for ER-2.
The results for BA and SPI should moderate between those of the two random graphs.

\emph{Results:}
Our results with regard to the impact of parameters $c$ and $g$ agree with the above expectations and underline the asymptotic bounds with concrete values.
Fig. \ref{fig:ori} 
displays the results for the original version of Algorithm \ref{algo:stabembed} on the SPI graph. Notably, our algorithm reduces the stabilization overhead to 2-3\% of a complete re-embedding for $g\geq 1.01$, as shown in Fig. \ref{fig:oriStab}. In absolute numbers, the average number of messages sent per step is slightly above 80 for a network of 3,000 to 4,000 online nodes. As expected, very low values of $g$ considerably increase the stabilization overhead because the network size estimation has to be adjusted frequently. 
Fig. \ref{fig:oriAvF} shows that there are nodes in the network that are responsible for 4 to 6 times as many addresses as the average node in their component for $g\geq 1.01$. For $g=1.001$, $mean(F)$ can be as low as $1.82$ at the price of a high stabilization overhead. 
In the worst case, displayed in Fig. \ref{fig:oriMaxF}, the imbalance increase to up to a factor 20. Note that the depth of the spanning tree varies between 20 and 30, so that the observed maximum is usually considerably lower than the theoretical upper bound.
While the stabilization overhead is indeed not significantly impacted by $c$ and $g$, an increase of $c$ entails an increase in imbalance, as expected. 
In comparison to Chord, a supposedly well-balanced P2P overlay, we achieve a lower value of $mean(F)$ for all considered parameters.
With regard to $max(F)$, we achieve a higher degree of balance for $c < 5$.
Hence, our content addressing achieves a similar or even better balance than existing solutions for content addressing. 

\begin{table}
\centering 
\begin{small}
\begin{tabular}{||l||c|c||c|c||c|c||}
\hline 
Sys & \multicolumn{2}{c||}{Original} & \multicolumn{2}{c||}{Simple} & \multicolumn{2}{c||}{Virtual} \\
 \hline 
 SPI & 0.025 & 4.16 & 0.015 & 4.78 & 0.024 & 8.84 \\ 
 BA & 0.041 & 4.20 & 0.021 & 4.79 & 0.040 & 11.33 \\
 ER-1 & 0.023 & 7.72 & 0.020 & 8.58 & 0.022 & 27.91 \\
 ER-2 & 0.071 & 1.14 & 0.036 & 1.21 & 0.071 & 1.14 \\ \hline
\end{tabular} 
\end{small}
\caption{Stabilization overhead $\frac{mean(\stab(\embed))}{mean(\embed)}$ (left column) vs. mean imbalance factor $mean(F)$ (right column) of Alg. \ref{algo:stabembed} ($c=1$, $g=2$) in different topologies: real-world social network SPI; Barabasi-Albert (BA); Erdos-Renyi, average degree $10.58$ (ER-1); Erdos-Renyi, average degree $922.2$ (ER-2)}
\label{tab:topologies}
\vspace{-2em}
\end{table} 

Now, we consider the impact of protocol variants on stabilization and content addressing. 
Fig. \ref{fig:variants} contrasts virtual binary trees and simple join with the original algorithm for $g=2$ and $c=1..10$.
Indeed, both variants decrease the stabilization overhead further, as displayed in Fig. \ref{fig:varStab}.
However, the insignificantly decreased stabilization overhead for the virtual tree variant comes at the price of a considerably higher imbalance.
As the depth of the virtual tree is usually between 200 and 300, Fig. \ref{fig:varF} shows that the actual observed imbalance can reach values close to 200.
In contrast, a simple join only slightly increases $mean(F)$ but leaves $max(F)$ largely unaffected.
Thus, the simulation study indicates that the actual improvement of the virtual binary tree variant with regard to stabilization overhead does not outweigh the drastic decrease in balance, at least for the considered graphs. 
However, using a simple join mechanism reduces the stabilization overhead by roughly a factor 2 without severe consequences on the balance of the content addressing.     

Last, Table \ref{tab:topologies} displays $\frac{mean(\stab(\embed))}{mean(\embed)}$ and $mean(F)$ for the four considered topologies focusing on the case of $c=1, g=2$.  The structure of the underlying graph drastically impacts the actual results. The stabilization overhead increases drastically for the densely connected network ER-2. Note that the virtual binary trees cannot counteract this increase, as the re-embedding is almost always executed by the parent of the newly joined or departed node. Thus, a simple join variant indeed nearly halves the overhead. Due to the low depth of the tree, ER-2 exhibits an extremely low imbalance $mean(F)$. 
In contrast, ER-1 exhibits a very low stabilization overhead but a considerably higher imbalance, as expected due to the low degree and high tree depth. 
BA and SPI, having a scale-free degree distribution with some high-degree nodes and mostly low-degree nodes, moderate between the extremes.    
 
This evaluation complements our theoretical bounds with concrete numbers. Not only do these concrete numbers validate our theoretical bounds, they also indicate that our algorithm is efficient and can achieve a more balanced content addressing than commonly used content addressing schemes.   

\section{Conclusion}
\label{sec:conclusion}

\longonly{
The main contribution of this paper is the design and formal verification of an approach that efficiently generates tree-based greedy embeddings for balanced content addressing on fully dynamic networks. We realized our solution by designing a stabilization algorithm and an embedding algorithm where the former makes use of the latter to dynamically update content network addresses.


We proved that our approach guarantees fair distribution of content addresses and we showed that the expected cost of a single change in the network is logarithmic for node joins and polylogarithmic for node departures. Finally we confirmed these formal bounds in a simulation on realistic problem instances.

Future work may explore alternative implementations of our algorithms to potentially improve upon the complexity bounds and the real world latency of both routing and stabilization.
}

\shortonly{
The main contribution of this paper is the design and formal verification of an approach that efficiently generates tree-based greedy embeddings for balanced content addressing on fully dynamic networks. A complementary simulation-based evaluation indicates that our stabilization algorithm is efficient while our content addressing scheme achieves a similar or improved degree of balance as DHTs. In contrast to DHTs, our algorithms are suitable for route-restricted networks without the freedom to establish (overlay) connections between arbitrary nodes.
Future work may explore the question of heterogeneous node resources as well as address the issue of resilience to failures and malicious behavior. 

}


\section*{Acknowledgements}
This work in parts was supported by DFG through the CRC HAEC and the
Cluster of Excellence cfaed. 

\bibliographystyle{unsrt}
\bibliography{bdcat-main}

\end{document}